\documentclass[sigconf]{acmart}
\AtBeginDocument{%
  }
\AtBeginDocument{\everymath{\small}}    
\usepackage{etoolbox}
\AtBeginEnvironment{align}{\small}

\copyrightyear{2026}
\acmYear{2026}
\setcopyright{cc}
\setcctype{by}
\acmConference[WWW '26]{Proceedings of the ACM Web Conference 2026}{April 13--17, 2026}{Dubai, United Arab Emirates}
\acmBooktitle{Proceedings of the ACM Web Conference 2026 (WWW '26), April 13--17, 2026, Dubai, United Arab Emirates}
\acmPrice{}
\acmDOI{10.1145/3774904.3792355}
\acmISBN{979-8-4007-2307-0/2026/04}

\usepackage{tabularx}
\usepackage{amsthm}
\usepackage{xspace}
\usepackage{mathrsfs}

\usepackage{xspace}
\usepackage{enumitem}
\usepackage{multirow}
\usepackage{bbm}
\usepackage{makecell}
\usepackage{tablefootnote}
\usepackage{threeparttable}
\usepackage{tabularx}
\usepackage{multirow}
\usepackage{graphicx}
\usepackage{subcaption}
\usepackage{bm}
\usepackage[table]{xcolor}
\usepackage{tabularx}
\usepackage{mathtools}

\usepackage{algorithm}
\usepackage[noend]{algpseudocode}
\usepackage{caption}
\usepackage{bbding}
\usepackage[dvipsnames]{xcolor}
\renewcommand{\qedsymbol}{\hfill$\blacksquare$}

\newcommand{\cmark}{\color{ForestGreen}{\CheckmarkBold}} 
\newcommand{\xmark}{\textcolor{red}{\XSolidBrush}}   

\newcommand{\smallsection}[1]{\noindent\textbf{#1.}}

\newcommand{\mat}[1]{\mathbf{#1}}
\newcommand{\matt}[1]{\mathbf{#1}^{\top}}

\newcommand{\vect}[1]{\mathbf{#1}}

\newcommand{\card}[1]{\lvert #1 \rvert}
\newcommand{\set}[1]{#1}

\newtheorem{theorem}{Theorem}
\newtheorem{lemma}[theorem]{Lemma}

\newtheorem{problem}{Problem}

\let\oldnl\nl
\newcommand{\nonl}{\renewcommand{\nl}{\let\nl\oldnl}}

\setlist[itemize]{topsep=1pt}
\setlist[enumerate]{topsep=1pt}
\newcommand{\method}{\textsc{AlphaFree}\xspace}
\newcommand{\movie}{\texttt{Movie}\xspace}
\newcommand{\book}{\texttt{Book}\xspace}
\newcommand{\video}{\texttt{Video}\xspace}
\newcommand{\baby}{\texttt{Baby}\xspace}
\newcommand{\steam}{\texttt{Steam}\xspace}
\newcommand{\beauty}{\texttt{Beauty}\xspace}
\newcommand{\health}{\texttt{Health}\xspace}

\newcommand{\items}{\bm{\mathcal{I}}}
\newcommand{\interactions}{\set{I}}
\newcommand{\collection}{\bm{\mathcal{H}}}
\newcommand{\batch}{\bm{\mathcal{B}}}
\newcommand{\traincol}{\collection_{\textnormal{train}}}
\newcommand{\lr}[1]{\vect{z}_{#1}}
\newcommand{\emb}[1]{\vect{e}_{#1}}
\newcommand{\cand}[1]{\mathcal{S}_{#1}}
\newcommand{\dlm}{d_{\textnormal{LR}}}
\newcommand{\tlm}{T_{\textnormal{LM}}}

\newcommand{\simf}[1]{\texttt{sim}_{#1}}

\begin{document}

\title{AlphaFree: Recommendation Free from Users, IDs, and GNNs}


\author{Minseo Jeon}
\orcid{0009-0007-2185-5031}
\affiliation{%
  \institution{Soongsil University}
  \city{Seoul}
  \country{Republic of Korea}}
\email{minseojeon@soongsil.ac.kr}

\author{Junwoo Jung}
\orcid{0009-0005-0612-4906}
\affiliation{%
  \institution{Soongsil University}
  \city{Seoul}
  \country{Republic of Korea}}
\email{junwoojung@soongsil.ac.kr}

\author{Daewon Gwak}
\orcid{0009-0008-5404-6553}
\affiliation{%
  \institution{Soongsil University}
  \city{Seoul}
  \country{Republic of Korea}}
\email{daewon1@soongsil.ac.kr}

\author{Jinhong Jung}
\authornote{Corresponding author.}
\orcid{0000-0002-5533-1507}
\affiliation{%
  \institution{Soongsil University}
  \city{Seoul}
  \country{Republic of Korea}}
\email{jinhong@ssu.ac.kr}

\renewcommand{\shortauthors}{Minseo Jeon, Junwoo Jung, Daewon Gwak, and Jinhong Jung}







\begin{abstract}
    Can we design effective recommender systems free from users, IDs, and GNNs?
Recommender systems are central to personalized content delivery across domains, with top-$K$ item recommendation being a fundamental task to retrieve the most relevant items from historical interactions.
Existing methods rely on entrenched design conventions, often adopted without reconsideration, such as storing per-user embeddings (user-dependent), initializing features from raw IDs (ID-dependent), and employing graph neural networks (GNN-dependent).
These dependencies incur several limitations, including high memory costs, cold-start and over-smoothing issues, and poor generalization to unseen interactions.

In this work, we propose \method, a novel recommendation method free from users, IDs, and GNNs.
Our main ideas are to 
infer preferences on-the-fly without user embeddings (user-free), 
replace raw IDs with language representations (LRs) from pre-trained language models (ID-free), and 
capture collaborative signals through augmentation with similar items and contrastive learning, without GNNs (GNN-free).
Extensive experiments on various real-world datasets show that \method consistently outperforms its competitors, achieving up to around 40\% improvements over non-LR-based methods and up to 5.7\% improvements over LR-based methods, while significantly reducing GPU memory usage by up to 69\% under high-dimensional LRs.

\end{abstract}

\begin{CCSXML}
<ccs2012>
<concept>
<concept_id>10002951.10003317.10003347.10003350</concept_id>
<concept_desc>Information systems~Recommender systems</concept_desc>
<concept_significance>500</concept_significance>
</concept>
</ccs2012>
\end{CCSXML}
\ccsdesc[500]{Information systems~Recommender systems}

\keywords{User-free recommendation,
ID-free representation,
GNN-free architecture,
Collaborative augmentation,
Contrastive learning
}


\maketitle

\section{Introduction}
\label{sec:intro}
\label{sec:introduction}
\begin{figure}
    \centering
    \includegraphics[width=0.95\linewidth]{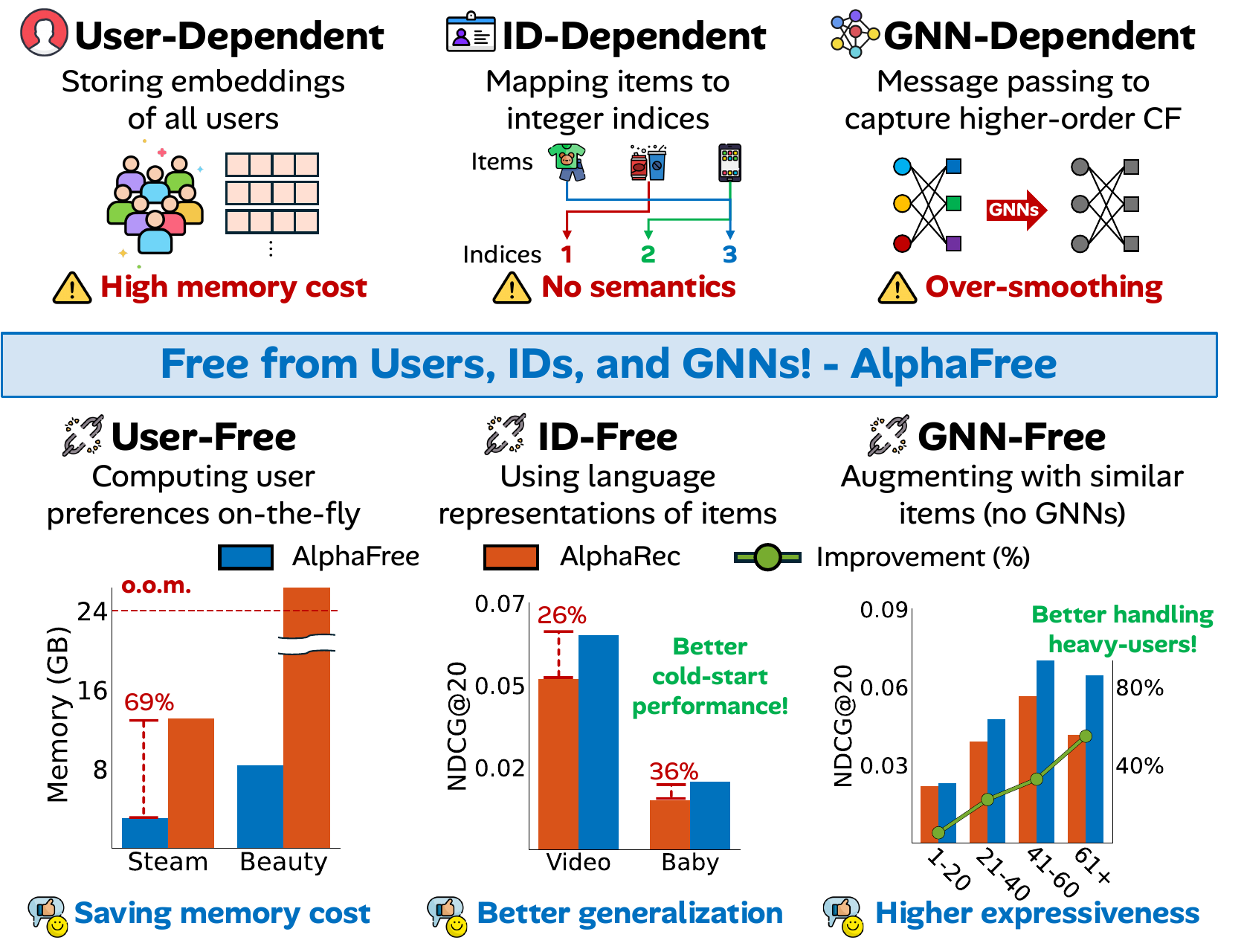}
    \caption{
        Highlight of the motivation and effectiveness of \method.
        (Top) Existing recommender systems depend on user embeddings, item IDs, or GNNs, leading to high memory cost, cold-start limitations, and over-smoothing.
        (Bottom) \method overcomes these challenges by performing recommendations free from users, IDs, and GNNs, showing 69\% less memory, 26–36\% higher accuracy in cold-start cases, and up to around 50\% improvement for heavy users.
    }
    \label{fig:introduction}
\end{figure}

Recommender systems play a central role in online services by delivering personalized content across domains such as e-commerce, streaming, and social media~\cite{abs-2407-13699,ParkJK17,LeeK25}.
A fundamental task in this field is \textit{top-$K$ item recommendation}, which aims to retrieve the $K$ most relevant items based on a user’s historical interactions~\cite{LeeSKLL11,JungPSK17,JungJSK16}.
Over the years, numerous methods have been proposed for this task, ranging from classical collaborative filtering~\cite{RendleFGS09, Rendle10, JuanZCL16, KorenBV09, Steck19} and neural network-based approaches~\cite{HeLZNHC17,0001DWLZ020, YuY00CN22, Wang0WFC19,LeeKS24} to more recent language model-based techniques~\cite{Geng0FGZ22, Sheng0ZCWC25, RenWXSCWY024, BaoZZWF023}, all striving to improve performance.
Throughout this development, recommender modeling has been largely shaped by several conventions that have become de facto standards: storing user embeddings, initializing with ID-based features, and employing graph neural networks (GNNs) to capture collaborative signals.
Based on these aspects, we categorize existing approaches into (1) user-dependent methods, (2) ID-dependent methods, and (3) GNN-dependent methods.

\textit{User-dependent methods} extract user and item embeddings from historical interactions, and require user embeddings to be maintained throughout training and inference.
Although user embeddings encode individual preferences, they become unreliable when interactions are sparse~\cite{abs-2312-10073} or noisy~\cite{Gao0HCZFZ22, HuangDDYFW021}, and their memory cost scales linearly with the number of users, which becomes substantial in domains with many users, such as e-commerce.
\textit{ID-dependent methods} rely on raw discrete identifiers (IDs) as the initial features of users and items. While simple and effective for seen entities, these representations lack semantic meaning and generalize poorly to unseen users or items.
\textit{GNN-dependent methods} use GNNs to capture high-order collaborative signals from explicit graph structures (e.g., user–item graphs).
However, they also need to preserve user embeddings (thus inherently user-dependent) as well as graphs, which leads to high memory usage, and their expressiveness is limited due to over-smoothing~\cite{LiHW18, abs-2303-10993, ChenLLLZS20}.

Several prior works have attempted to partially address the limitations of these three methods.
For instance, FISM~\cite{KabburNK13} and SLIM-based methods~\cite{NingK11} neither store user embeddings nor employ GNNs; however, they rely on ID-based features and struggle to capture higher-order collaborative signals.
LightGCN~\cite{0001DWLZ020} and XSimGCL~\cite{YuXCCHY24} capture higher-order structures with GNNs but also rely on ID-based features, store user embeddings, and suffer from over-smoothing.
Recent methods such as RLMRec~\cite{RenWXSCWY024} and AlphaRec~\cite{Sheng0ZCWC25} leverage language models to exploit semantic representations of items, but they remain user-dependent or rely on GNNs.
Figure~\ref{fig:introduction} summarizes these dependencies and limitations, motivating our work, while Table~\ref{table:overall} shows that no existing approach addresses them simultaneously, raising the question: \textit{Can we design effective recommender systems free from users, IDs, and GNNs?}

In this paper, we propose \method, a novel method to top-$K$ item recommendation that is free from users, IDs, and GNNs, while achieving strong performance across various benchmarks.
\method builds on three core design principles:
\begin{itemize}[leftmargin=4mm]
    \item{\textbf{User-free}: Rather than storing user embeddings, it computes latent preferences on-the-fly from interaction history, reducing memory usage and improving cold-start handling.}
    \item{\textbf{ID-free}: It replaces raw IDs with semantic item representations derived from language models, enabling generalization to unseen items and capturing semantic similarity.}
    \item{\textbf{GNN-free}: It captures collaborative signals without GNNs by augmenting representations with behaviorally and semantically similar items and applying contrastive learning.
    }
\end{itemize}

By removing all three dependencies, \method achieves greater memory efficiency, improved generalization, and higher accuracy than the existing methods.
Our contributions are as follows:
\begin{itemize}[leftmargin=4mm]
    \item {\textbf{Method.} We propose \method, a novel method that performs recommendations free from users, IDs, and GNNs.}
    \item {\textbf{Techniques.} We show our design principles are effectively realized through:
    (1) augmenting with behaviorally and semantically similar items,
    (2) on-the-fly preference modeling, and
    (3) contrastive learning between the original and augmented views.
    }
    \item {\textbf{Experiment.} 
    Through our extensive experiments, we show \method achieves better recommendation accuracy and memory efficiency compared to state-of-the-art methods.
    } 
\end{itemize}

\vspace{-2.5mm}
\section{Related Work}
\label{sec:related}
\smallsection{Traditional methods dependent on users, IDs, or GNNs} 
Most classical methods are \textit{user-dependent}, as they require user embeddings.
Matrix factorization-based methods, such as MF-BPR \cite{KorenBV09,RendleFGS09} and NCF~\cite{HeLZNHC17}, extract user embeddings along with item embeddings from a user-item  matrix.
Graph-based methods, such as LightGCN~\cite{0001DWLZ020} and XSimGCL~\cite{YuXCCHY24}, perform message passing over user-item graphs to learn users and items embeddings.
Although they directly encode user preferences in an embedding space, they incur memory costs that grow with the number of users, 
which becomes substantial in user-heavy domains such as e-commerce (see Table~\ref{tab:datasets}).

Moreover, traditional methods including the aforementioned models  are \textit{ID-dependent}, initializing embeddings with raw discrete IDs (e.g., unique integer indices for users or items) due to the absence of user or item content and their reliance on user-item interactions.
These IDs are typically mapped to one-hot encoding or randomly initialized features followed by training.
While straightforward and effective for learning interactions without domain knowledge, they struggle to capture meaningful patterns under sparse interactions and to generalize to new entities.

With the emergence of GNNs, many methods have been developed that are \textit{GNN-dependent}, relying on message passing over a user–item graph to capture higher-order collaborative signals.
For example, LightGCN employs simplified GNN layers to propagate user and item embeddings across the graph, capturing higher-order  signals through multiple layers.
XSimGCL~\cite{YuXCCHY24} further introduces graph contrastive learning~\cite{YouCSCWS20,abs-1809-10341,abs-2006-04131} for better expressiveness.
These methods enhance performance by using higher-order CF signals, but they remain inherently user-dependent, requiring embeddings for all users. 
Additionally, they suffer from over-smoothing, where embeddings become too similar after repeated message passing, which is particularly severe for high-degree nodes due to excessive averaging over many neighbors, limiting performance.

\smallsection{Methods partially free from users, IDs, or GNNs} 
Several studies have attempted to partially relax those dependencies. 
User-free methods discard user embeddings while retaining only item embeddings. Along this line, SLIM~\cite{NingK11} directly learns item–item similarities, while FISM~\cite{KabburNK13} models them through factorized matrices of items. However, both approaches remain ID-dependent. 
Early-stage methods such as MF-BPR and NCF are trivially GNN-free as they predate the advent of GNNs, while still relying on users and IDs.
With advances in language models, RLMRec~\cite{RenWXSCWY024} and AlphaRec~\cite{Sheng0ZCWC25} have been recently proposed as ID-free, using language representations (LRs) derived from item content instead of IDs during training. 
However, both store user embeddings and thus remain user-dependent, while AlphaRec is additionally GNN-dependent.
In particular, the LR-based methods face challenges of user dependency w.r.t. memory consumption, as most language models produce high-dimensional embeddings, which are then used to compute embeddings for every user\footnote{For example, at the initial stage, LightGCN uses 64-dimensional embeddings~\cite{0001DWLZ020}, while AlphaRec adopts 3,072-dimensional ones from text-embedding-3-large~\cite{Sheng0ZCWC25,abs-2201-10005}.}.
To the best of our knowledge, no existing work resolves all these dependencies simultaneously, whereas our \method successfully resolves them.

\section{Preliminaries}
\label{sec:prelim}
\begin{figure*}[t]  
\centering
\includegraphics[width=1\textwidth]{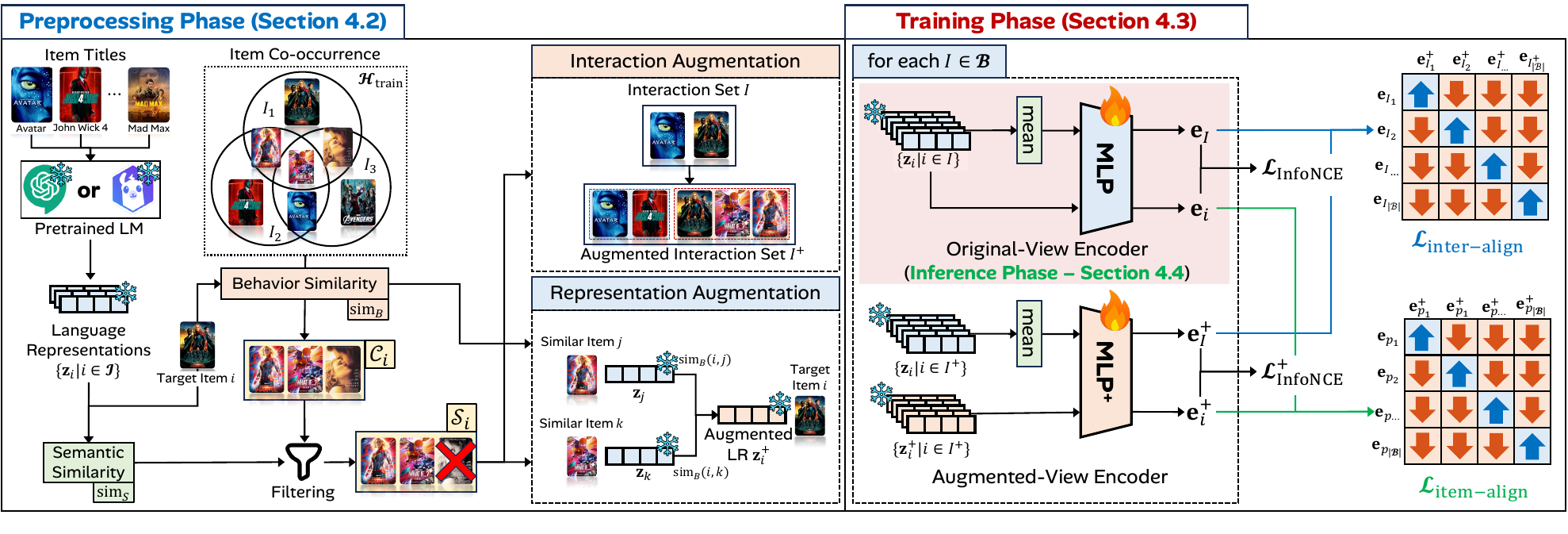}
\vspace{-7mm}
\caption{
Overall process of \method, consisting of preprocessing (item LRs and augmentations), training (contrastive alignment of the original and augmented views), and inference (using only the original-view encoder). See Table~\ref{tab:symbols} for symbols.
}
\label{fig:overall}
\end{figure*}

\smallsection{Notations}
We denote the set of all items by $\items$, and $t_i$ is the associated text content (e.g., title) of item $i$.
Let $\interactions \subseteq \items$ denote the interaction set of an arbitrary user, i.e., $\interactions = \{i_1, \cdots, i_{\card{\interactions}}\}$.
Let $\collection$ denote the collection of all interaction sets in a dataset, i.e., $\collection = \{\interactions \mid \interactions \subseteq \items\}$.

\smallsection{Problem definition} The problem is defined as follows:
\begin{problem}[Top-$K$ Item Recommendation]
\label{prob}
\textbf{Given} a training collection of interaction sets $\traincol$, 
the goal is to \textbf{learn} the parameters $\theta$ of a recommendation model,
in order to \textbf{recommend} the items in the top-$K$ ranked list $\mathcal{R}_{\interactions_q}$ for a querying interaction set $\interactions_q$:
\begin{equation*}
    \mathcal{R}_{\interactions_q} := \texttt{argtop}_{K}\Bigl(
        \bigl\{ 
            y_\theta(i \mid \interactions_q) \mid i \in \items  
        \bigr\}
    \Bigr)
\end{equation*}
where $y_{\theta}(i \mid \interactions_q)$ indicates the recommendation score of item $i$ w.r.t. $\interactions_q$ produced by the model of $\theta$, and $\texttt{argtop}_{K}(\cdot)$ returns the list of the top-$K$ items with the highest scores.
\end{problem}

\vspace{-2.5mm}
\section{Proposed Method}
\label{sec:method}
We propose \method, a novel recommendation method free from users, IDs, and GNNs for Problem~\ref{prob}.

\subsection{Overview}
\label{sec:method:overview}
We describe the overview of \method, where Figure~\ref{fig:overall} depicts its overall process.
As introduced in Section~\ref{sec:intro}, 
we design \method based on the following principles:
\begin{itemize}[leftmargin=4mm]
    \item{
        \textbf{User-free:} 
        We adopt a user-free mechanism that infers latent preferences on-the-fly from an interaction set, thereby eliminating user embeddings and improving memory efficiency\footnote{Here, “user-freeness” does not mean the removal of personalization, but rather aims to eliminate the need to store user embeddings by computing them on the fly.}.
    }
    \item{
        \textbf{ID-free:} We replace raw item IDs with language representations (LRs) derived from item contents using a pre-trained language model, which enhances generalization.
    }
    \item{
        \textbf{GNN-free:} We avoid GNNs and instead augment embeddings with similar items, improving efficiency and expressiveness by avoiding message-passing and over-smoothing.
    }
\end{itemize}

By removing these dependencies, we show that a simple and lightweight model can achieve efficient memory usage (particularly important when using high-dimensional LRs) while delivering higher recommendation accuracy across diverse benchmarks.

\method consists of three phases: preprocessing, training, and inference (refer to Appendix~\ref{appendix:algorithms} for details).
The preprocessing phase (Section~\ref{sec:method:preprocessing}) computes LRs for all items and retrieves behaviorally and semantically similar items to augment interactions and representations.
The training phase (Section~\ref{sec:method:training}) learns embeddings for items and interaction sets in batches, and injects higher-order CF signals through contrastive learning with the augmented ones.
Unlike training, the inference phase (Section~\ref{sec:method:inference}) performs recommendations solely based on the querying interaction set $\interactions_q$, without augmentations.
As the trained parameters embed information from augmented interactions, this allows it to effectively capture latent preferences in $\interactions_q$.

\subsection{Preprocessing Phase (Algorithm~\ref{alg:preprocessing})}
\label{sec:method:preprocessing}

This phase aims to preprocess LRs for all items in $\items$ and identify similar items for a target item using the LRs and $\traincol$, which are used to augment interactions and representations as higher-order CF signals.
Notice this phase is performed only once before training.

\subsubsection{Language Representations of Items}
\label{sec:method:preprocessing:lm}
Unlike ID-dependent methods, LR-based approaches such as AlphaRec~\cite{Sheng0ZCWC25} provide empirical evidence that item LRs inherently capture CF signals, implying that better performance can be achieved without relying on raw IDs.
Inspired by this, we pursue an \textbf{ID-free} design by using a pre-trained language model $\texttt{LM}(\cdot)$ as follows: 
\begin{equation}
    \label{eq:lm}
    \lr{i} \leftarrow \texttt{LM}(t_i),
\end{equation}
where $\lr{i} \in \mathbb{R}^{\dlm}$ denotes the LR of item $i$ with output dimension $\dlm$, and $t_i$ is the text description of item $i$ (i.e., its title).
Any pre-trained LM can be used for this (e.g., BERT, LLaMA, or GPT.), and the detailed configurations of the LMs are provided in Appendix~\ref{appendix:lr_explain}.



\subsubsection{Behaviorally and Semantically Similar Items}
\label{sec:method:preprocessing:similar}

We identify similar items $\cand{i}$ for each item $i$, using $\traincol$ and $\{\lr{j}\}$.
Each $I \in \traincol$ provides behavioral signals indicating which items are consumed by a user, while $\lr{j}$ captures the semantics of item $j$.
We describe our simple but effective approach using them for obtaining $\cand{i}$.

\smallsection{Behavior similarity} 
For $\traincol$, we employ \textit{co-occurrence}~\cite{Karypis01} as behavioral similarity, since it offers a simple and efficient way to capture CF signals and has been widely adopted in prior work~\cite{VeitKBMBB15,LiuZZ023,ZhouZY23}.
The intuition is that items frequently consumed together are likely to share collaborative patterns, with co-occurrence leveraging interactions as implicit bridges to capture higher-order signals.
Formally, the measure is defined as:
\begin{equation}
\label{eq:sim:b}
\simf{B}(i, j) = \!\!\!\!\! \sum_{\set{I} \in \traincol} \!\!\! \mathbb{I}(i \in \set{I}) \cdot \mathbb{I}(j \in \set{I}),
\end{equation}
where $\mathbb{I}(\cdot)$ returns $1$ if the given predicate holds, and $0$ otherwise.

\smallsection{Semantic similarity}
We also consider semantic similarity, which captures the inherent meaning of items. 
Since LRs are commonly compared via dot products~\cite{KarpukhinOMLWEC20, 0046ZL21}, we adopt this measure, as:
\begin{equation}
\label{eq:sim:s}
\simf{S}(i,j) = \lr{i}^\top\lr{j},
\end{equation}
where $\lr{i}$ and $\lr{j}$ are the LRs of items $i$ and $j$, respectively.

\smallsection{Behavioral and semantic filtering}
While $\simf{B}$ provides useful co-occurrence signals, it may include noisy candidates.
To refine the selection, we introduce our filtering strategy based on both similarities.
Specifically, we first construct a candidate set $\mathcal{C}_{i}$ by selecting the top-$K_c$ items with the highest $\simf{B}$ to item $i$, as:
\begin{equation}
\label{eq:cand:c}
\mathcal{C}_{i} = \texttt{argtop}_{K_c}\Bigl(\bigl\{
        \simf{B}(i, j) \mid j \in \items
    \bigr\}\Bigr).
\end{equation}
We then filter $\mathcal{C}_{i}$ by retaining only those whose semantic similarity exceeds the mean value $\mu_i$ as follows:
\begin{equation}
\label{eq:sim:bs}
\cand{i} = \bigl\{
    j \in \mathcal{C}_{i} \mid \simf{S}(i, j) \geq \mu_i
\bigr\}, \quad 
\mu_i = \frac{1}{\card{\items}} \sum_{j \in \items} \simf{S}(i,j).
\end{equation}
Figure~\ref{fig:symmentic_filtering} depicts the intuition behind our strategy and the choice of $\mu_i$. 
Given a query item $i$ (blue), candidates are first retrieved by $\simf{B}$ and then refined by $\simf{S}$ using $\mu_i$. 
The semantic similarity distribution shows the 25\%, $\mu_i$, and 75\% values; the 25\% cutoff is too tight, the 75\% cutoff too loose, and $\mu_i$ provides a balanced threshold that retains semantically relevant items while filtering out irrelevant ones.
The performance under these cutoff values are provided in Appendix~\ref{appendix:semantic}, where $\mu_i$ provides better performance.

\begin{figure}[t!]
    \centering
    \begin{subfigure}[b]{0.48\linewidth}
        \includegraphics[width=\linewidth]{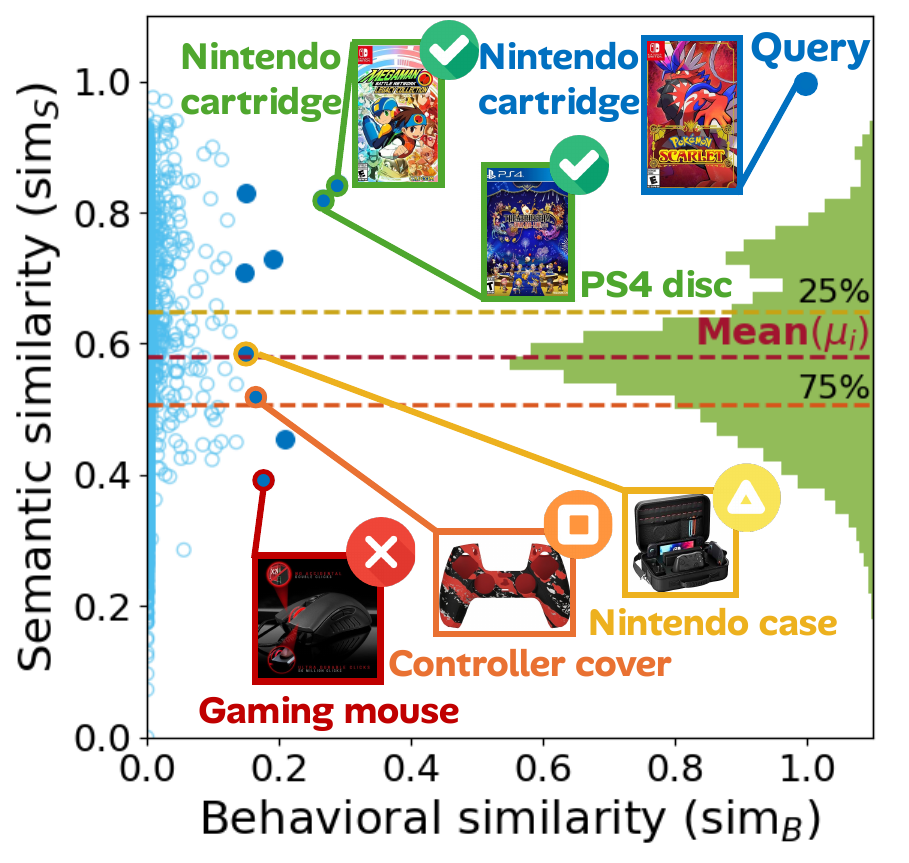}
        \caption{\video.}
        \label{fig:symmentic_filtering:video}
    \end{subfigure}
    \hfill
    \begin{subfigure}[b]{0.48\linewidth}
        \includegraphics[width=\linewidth]{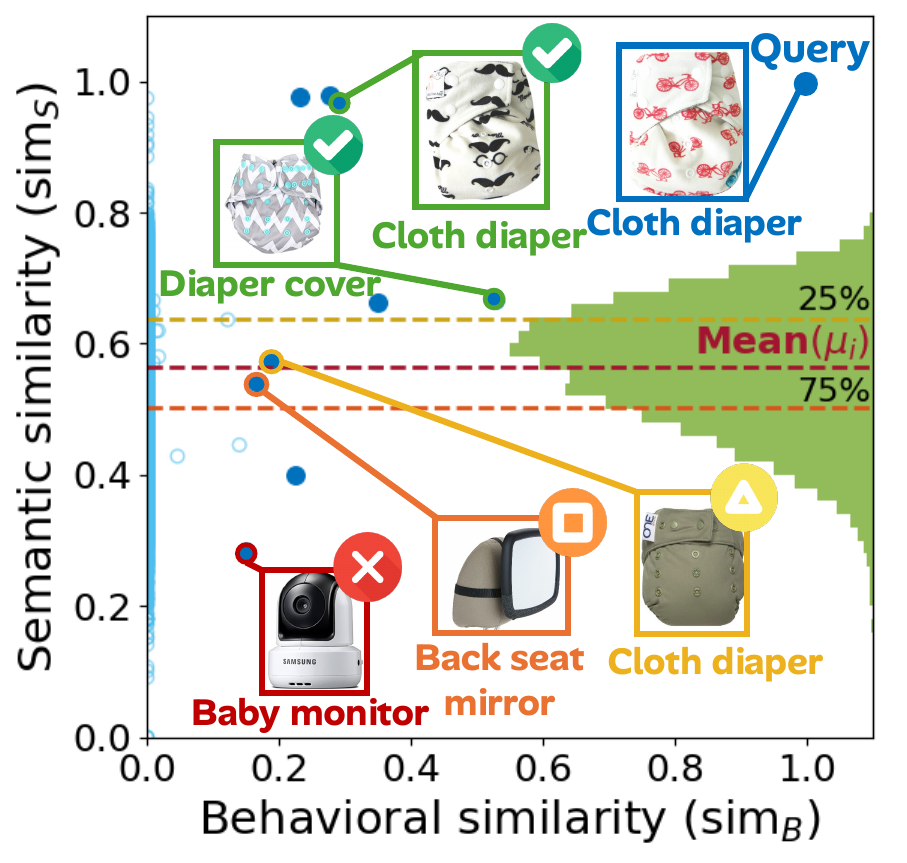}
        \caption{\baby.}
        \label{fig:symmentic_filtering:baby}
    \end{subfigure}
    \caption{
        Examples of behavioral–semantic similarities (min-max normalized) for query items $i$, where $K_c=10$, blue circles denote $\mathcal{C}_i$ based on $\simf{B}$, and the green histogram shows the distribution of $\simf{S}$.
        A 25\% threshold based on $\simf{S}$ yields overly semantically similar items, the 75\% threshold includes weakly related ones, while $\mu_i$ provides a balanced trade-off.
    }
    \label{fig:symmentic_filtering}
\end{figure}

\subsubsection{Augmentation of Interactions and Representations}
\label{sec:method:preprocessing:augmentation}

For each item $i$, we obtain its similar items $\cand{i}$, which serve as higher-order CF signals for the following augmentations, allowing our model to remain \textbf{GNN-free} in both training and inference.  

\smallsection{Interaction augmentation}
\label{sec:method:interaction_augmenatation}
To enrich CF signals, we augment each $I \in \traincol$ with similar items. 
For each item $i \in I$, we use $\cand{i}$ and take the union with $I$, yielding the augmented set:  
\begin{equation}
I^+ = I \cup \texttt{union}\Bigl(
    \bigl\{
        \cand{i} \mid i \in I
    \bigr\}
\Bigr).
\end{equation}
This injects higher-order CF signals at the \emph{interaction level}, adding new interactions that were not explicitly observed.

\smallsection{Representation augmentation}
\label{sec:method:representation_augmenatation}
We also augment the representation of each item $i$ with $\cand{i}$. 
In traditional item–item CF~\cite{SarwarKKR01,LindenSY03}, items that frequently co-occur in interactions are regarded as related and provide useful CF signals. 
Inspired by this, we combine the LRs of similar items $\cand{i}$, weighted by the behavioral similarities:  
\begin{equation}
    \label{eq:representation_aug}
    \lr{i}^{+} = \frac{\sum_{j \in \cand{i}}\simf{B}(i,j)\cdot\lr{j}}{\sum_{j \in \cand{i}}\simf{B}(i,j)}.
\end{equation}
This injects higher-order CF signals at the \emph{representation level}, enhancing $\lr{i}$ with co-occurrence-based behavioral signals.

\subsection{Training Phase (Algorithm~\ref{alg:training})}
\label{sec:method:training}
This phase takes the original and augmented views (or data), and learns model parameters $\Theta$ using mini-batches $\batch \subseteq \traincol$ at each epoch. For each $\batch$, we encode both views and contrastively align them, thereby injecting higher-order CF signals into $\Theta$.

\subsubsection{Encoders for the Original and Augmented Views} 
\label{sec:method:training:encoders}
We describe how both encoders generate embeddings for each $I \in \batch$.

\smallsection{Original view} We encode the original view by computing embeddings for both $I$ and its constituent items using an MLP layer:
\begin{equation*}
    \label{eq:encoder:original}
    \emb{I} = \texttt{MLP}\Bigl(
        \texttt{mean}\bigl(
            \{\lr{i} \mid i \in I\}
        \bigr)
    \Bigr), \;\; \text{and} \;\;
    \emb{i} = \texttt{MLP}(\lr{i}), \; \forall i \in I, 
\end{equation*}
where $\emb{I}$ captures the latent preference of an arbitrary user from interactions $I$, which is computed on-the-fly, making our model \textbf{user-free} without storing user embeddings.  

This structure can be seen as a counterpart to AlphaRec~\cite{Sheng0ZCWC25} without GNNs, but it alone cannot capture higher-order CF signals.
Thus, AlphaRec relies on GNNs to learn such signals, making it both user- and GNN-dependent.
However, \method encodes the following augmented view to avoid GNNs, resolving such dependencies.

\smallsection{Augmented view}
We encode the augmented view by computing embeddings for $I^{+}$ with augmented LRs $\{\lr{i}^{+}\}$ obtained from the preprocessing phase using another MLP layer:
\begin{equation*}
    \label{eq:encoder:augment}
    \emb{I}^{+} = \texttt{MLP}^{+}\Bigl(
        \texttt{mean}\bigl(
            \{\lr{i} \mid i \in I^{+}\}
        \bigr)
    \Bigr), \;\; \text{and} \;\;
    \emb{i}^{+} = \texttt{MLP}^{+}(\lr{i}^{+}), \; \forall i \in I^{+},
\end{equation*}
where $\emb{I}^{+}$ denotes interaction-level augmentation for $I$ and thus uses the original LRs, whereas $\emb{i}^{+}$ corresponds to representation-level augmentation for $i$ based on the augmented LRs.

\smallsection{InfoNCE losses for recommendation}
We next utilize embeddings from each view to learn recommendations. A common approach is BPR loss~\cite{RendleFGS09}, which contrasts one positive with one negative. However, recent studies~\cite{ChenTZ0SZWC24, Sheng0ZCWC25}, have empirically shown that InfoNCE, contrasting one positive with multiple negatives, outperforms BPR in recommendation. 
We thus adopt it as well.
For conciseness, we first provide a general definition of \textit{log-softmax} (or its negative is considered a local loss):
\begin{equation*}
    \ell^{(\tau)}(\vect{a}, \vect{p}; C) = \log{\frac{\texttt{exp}\bigl(\sigma(\vect{a}, \vect{p})/\tau\bigr)}{\sum_{\vect{c} \in C}\texttt{exp}\bigl(\sigma(\vect{a}, \vect{c})/\tau\bigr)}}, 
\end{equation*}
where 
$\sigma(\cdot, \cdot)$ is cosine similarity, $\tau$ is a temperature parameter, 
$\vect{a}$ and $\vect{p}$ denote embeddings for \textit{anchor} and \textit{positive} samples, respectively, 
and $C$ is the set of candidate embeddings considered in the softmax.
This quantifies the relative closeness of the anchor–positive pair against the other candidates in $C$.

For each $I \in \batch$, we randomly sample a positive item $p \sim U(I)$ and define $N_I$ as the set of negatives drawn from $\items \setminus I$.
Then, the InfoNCE loss for the original view is as follows:
\begin{equation}
    \label{eq:loss:rec:original}
    \mathcal{L}_\text{InfoNCE}(\batch) = -\frac{1}{\card{\batch}}\sum_{I\in\batch}
    \ell^{(\tau_r)}(\emb{I}, \emb{p}; \mathcal{E}_I),
\end{equation}
where $\mathcal{E}_I = \bigl\{ \emb{i} \mid i \in N_I \cup \{p\}\bigr\}$, and $\emb{i}=\texttt{MLP}(\textbf{z}_i)$ for each negative item $i \in N_\interactions$.
This encourages $\emb{I}$ to be closer to $\emb{p}$
while distinguishing it from the negatives. 
Similarly, the loss for the augmented view is represented as:
\begin{equation}
    \label{eq:loss:rec:augmented}
    \mathcal{L}_\text{InfoNCE}^{+}(\batch) = -\frac{1}{\card{\batch}}\sum_{I\in\batch}
    \ell^{(\tau_r)}(\emb{I}^{+}, \emb{p}^{+}; \mathcal{E}_I^{+}), 
\end{equation}
where $\mathcal{E}_I^{+} = \bigl\{\emb{i}^+ \mid i \in N_I \cup \{p\}\bigr\}$, and $\emb{i}^+=\texttt{MLP}^+(\textbf{z}^+_i)$ for each negative item $i \in N_\interactions$.\footnote{We tested negatives from $\items \setminus I^{+}$, but found no performance difference from $\items \setminus I$ in preliminary experiments; therefore, we reuse $N_I$ in Eq.~\eqref{eq:loss:rec:original} for efficiency.}
This contrasts $\emb{I}^{+}$ with $\emb{p}^{+}$ against the negatives.
We use the same $\tau_r$ for both losses for simplicity.
For efficiency, we compute $\emb{i}$ and $\emb{i}^{+}$ for all items in each batch (see Appendix~\ref{appendix:algorithms}).

\subsubsection{Cross-View Alignments}
The original view captures item consumption context but misses higher-order CF signals, while the augmented view provides such signals but may overemphasize them.
Thus, we align both views so that our model can effectively integrate their complementary information.
To this end, inspired by the CLIP-style contrastive loss~\cite{abs-2103-00020}, we adapt it to align the two views both at the interaction level and at the item level.

\smallsection{Interaction-level alignment}
We first align the original and augmented embeddings of interaction sets.  
Following the CLIP framework, each pair $(\emb{I}, \emb{I}^{+})$ is treated as a positive for $I\in\batch$, 
whereas other interaction embeddings in $\mathcal{B}$ are used as negatives, which improves efficiency and stability during training.
To achieve bidirectional consistency between $\emb{I}$ and $\emb{I}^{+}$, we design the interaction-level alignment loss in a symmetric manner as follows:

\vspace{-7mm}
\begin{align}
    \refstepcounter{equation}
    \tag{\small\theequation}\label{eq:loss:align:inter} \\[-0.5em]
    \mathcal{L}_{\text{inter-align}}(\batch) = -\frac{1}{2\card{\batch}}\sum_{I \in \batch}\Bigl(
        \underbrace{\ell^{(\tau_a)}\bigl(\emb{I}, \emb{I}^{+}; \mathcal{E}_{\batch}^{+}\bigr)}_{\text{original}\to\text{augmented}} + 
        \underbrace{\ell^{(\tau_a)}\bigl(\emb{I}^{+}, \emb{I}; \mathcal{E}_{\batch}\bigr)}_{\text{augmented}\to\text{original}}
    \Bigr), \nonumber
\end{align}
where $\mathcal{E}_{\batch} = \{\emb{I} \mid I \in \batch\}$ and $\mathcal{E}_{\batch}^{+} = \{\emb{I}^{+} \mid I \in \batch\}$, resp. 
For each $I$, the former encourages the original $\emb{I}$ to be closer to its augmented counterpart $\emb{I}^+$ against other augmented embeddings, while the latter enforces the opposite direction against other original ones.

\smallsection{Item-level alignment}
We then align the original and augmented embeddings of items.
In this step, we only consider positive items, as the goal is to align the same item pairs across both views, which can be sufficiently achieved using the positive items within $\mathcal{B}$, while including all other items increases computational cost.
Suppose $P_{\batch}$ is the set of positive items for $\batch$, i.e., $P_{\batch} = \{p \sim U(I) \mid I \in \batch\}$.
Then, the item-level alignment loss is defined as follows:

\vspace{-7mm}
\begin{align}
    \refstepcounter{equation}
    \tag{\small\theequation}\label{eq:loss:align:item} \\[-0.5em]
    \mathcal{L}_{\text{item-align}}(\batch) \!= \!-\frac{1}{2\card{P_{\batch}}}\!\sum_{p \in P_{\batch}}\Bigl(
        \underbrace{\ell^{(\tau_a)}\bigl(\emb{p}, \emb{p}^{+}; \mathcal{E}_{P_{\batch}}^{+}\bigr)}_{\text{original}\to\text{augmented}} + 
        \underbrace{\ell^{(\tau_a)}\bigl(\emb{p}^{+}, \emb{p}; \mathcal{E}_{P_{\batch}}\bigr)}_{\text{augmented}\to\text{original}}
    \Bigr),  \nonumber
\end{align}
where 
$\mathcal{E}_{P_{\batch}} = \{\emb{p} \mid p \in P_{\batch}\}$, and 
$\mathcal{E}_{P_{\batch}}^{+} = \{\emb{p}^{+} \mid p \in P_{\batch}\}$, resp.
The former aligns $\mathbf{e}_p$ to $\mathbf{e}_p^{+}$ against other augmented embeddings, 
while the latter aligns $\mathbf{e}_p^{+}$ back to $\mathbf{e}_p$ against other original embeddings.

\subsubsection{Final loss}
\label{sec:loss:final}
We describe the final loss $\mathcal{L}_\text{final}(\cdot)$ for training our model given $\batch \subseteq \traincol$, which is defined as follows:
{\small\begin{equation}
    \label{eq:loss:final}
    \mathcal{L}_\text{final}(\batch) = \mathcal{L}_\text{rec}(\batch) + \lambda_\text{align} \cdot \mathcal{L}_\text{align}(\batch),
\end{equation}}%
where $\mathcal{L}_\text{rec}(\cdot)$ is the recommendation loss, $\mathcal{L}_\text{align}(\cdot)$ is the alignment loss, and $\lambda_{\text{align}}$ is a hyperparameter that balances both components.
The loss functions $\mathcal{L}_\text{rec}(\cdot)$ and $\mathcal{L}_\text{align}(\cdot)$ are defined as follows:
\begin{align*}
    \mathcal{L}_\text{rec}(\batch) &=  \mathcal{L}_\text{InfoNCE}(\batch) 
+ \mathcal{L}_\text{InfoNCE}^+(\batch), \\
    \mathcal{L}_\text{align}(\batch) &= \mathcal{L}_\text{inter-align}(\batch) 
+ \mathcal{L}_\text{item-align}(\batch).
\end{align*}
Notice \method's parameters consist only of those of the MLP layers, i.e., $\Theta = \{\theta_{\texttt{MLP}}, \theta_{\texttt{MLP}^{+}}\}$,
where each is a two-layer network mapping from $\dlm$ to $d$ with a weight matrix and a bias per layer.
During training, for each $\mathcal{B}$ at every epoch, 
we compute $\mathcal{L}_\text{final}(\mathcal{B})$, 
backpropagate the gradients, and update  $\Theta$ via gradient descent.

\vspace{-1mm}
\subsection{Inference Phase (Algorithm~\ref{alg:inference})}
\label{sec:method:inference}
In the inference phase, we compute recommendation scores for all items with respect to a querying interaction set $I_q$, based on the trained parameters.
Our idea in this phase is to utilize the parameters $\theta_{\texttt{MLP}}$ in the original view.
The main reasons are: (1) higher-order CF signals from the augmented view have already been aligned into the original-view encoder $\texttt{MLP}(\cdot)$ during training, and (2) the augmentation of $I_q$ is therefore unnecessary at inference, leading to better efficiency.
For each item $i$, the recommendation score with respect to $I_q$ in Problem~\ref{prob} is defined as:
\begin{equation}
    \label{eq:inference}
    y_{\theta_{\texttt{MLP}}}(i \mid I_q) = \sigma\bigl(
        \emb{I_q}, \emb{i}
    \bigr),
\end{equation}
where 
$\sigma(\cdot, \cdot)$ denotes cosine similarity,
$\emb{I_q} = \texttt{MLP}\bigl(
        \texttt{mean}\bigl(
            \bigl\{\lr{i} \mid i \in I_q\bigr\}
        \bigr)
    \bigr)$, and $\emb{i} = \texttt{MLP}\bigl(\lr{i}\bigr)$.
Note that \textbf{our inference phase, as well as preprocessing and training, performs recommendations free from users, IDs, and GNNs}, 
as it efficiently computes the embedding of $I_q$ for an arbitrary user on-the-fly (\textit{\textbf{user-free}}), 
leverages item LRs (\textit{\textbf{ID-free}}), 
and exploits higher-order CF signals aligned into the original-view encoder without relying on any GNNs (\textit{\textbf{GNN-free}}).

\vspace{-2mm}
\subsection{Complexity Analysis}
Suppose $n = |\items|$, $m = \sum_{I \in \traincol}|I|$, and $h = |\traincol|$ denote the numbers of items, interactions, and interaction sets, respectively.
Let $\dlm$ be the dimension of LRs.
Other terms are hyperparameters. 
The detailed analysis and proofs are in Appendix~\ref{appendix:comp}.


\smallsection{Time complexities}
\method's time complexities are as follows:
\begin{theorem}[Time Complexities of \method]
\label{theorem:time}
In the preprocessing phase, \method takes $O(T_{\textnormal{LM}}n+ mn + (\dlm + \log{K_c})n^2)$ time, where $T_{\textnormal{LM}}$ is the time used by the LM.
Its training phase requires $O((T_i d n + K_c m + d h)\dlm + (n_s + |\batch|)hd)$ time for each epoch, where $T_i = h/|\batch|$ is the number of iterations over all batches.
Its inference phase takes $O(\dlm d n + n\log{K})$ time. \qedsymbol
\end{theorem}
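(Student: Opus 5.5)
The plan is to bound each phase separately by walking through the operations of Algorithms~\ref{alg:preprocessing}, \ref{alg:training}, and \ref{alg:inference} and summing their costs.

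\textbf{Preprocessing.} Computing the LRs in Eq.~\eqref{eq:lm} costs $T_{\textnormal{LM}}$ per item, so $O(T_{\textnormal{LM}}n)$ in total. For the behavioral similarity in Eq.~\eqref{eq:sim:b}, we keep an $n\times n$ co-occurrence table. For each item $i$ and each $I \in \traincol$ containing $i$, we scan $I$. The total work is $\sum_{I}|I|^2 \le \sum_I |I|\,n = mn$, giving $O(mn)$.

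The semantic similarities in Eq.~\eqref{eq:sim:s} need all $n^2$ dot products, costing $O(\dlm n^2)$. The means $\mu_i$ are then computed within the same bound. Selecting the top-$K_c$ items in Eq.~\eqref{eq:cand:c} with a size-$K_c$ heap takes $O(n\log K_c)$ per item, so $O(n^2\log K_c)$ overall. Filtering by $\mu_i$ (Eq.~\eqref{eq:sim:bs}) and forming $\lr{i}^+$ (Eq.~\eqref{eq:representation_aug}) cost $O(K_c\dlm)$ per item, which is dominated by the earlier terms. Interaction augmentation costs $O(K_c m)$ and can be charged to training, where it is reused.

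\textbf{Training, per epoch.} I count the following contributions.
\begin{enumerate}
\item Following the efficiency note, $\emb{i}$ and $\emb{i}^+$ are computed for all $n$ items in every batch. Each is a two-layer MLP evaluation of cost $O(\dlm d)$, so over $T_i$ batches this gives $O(T_i d n \dlm)$.
\item The mean pooling for $\emb{I}$ and $\emb{I}^+$ sums LRs over $I$ and $I^+$. Since $|I^+| \le |I|(1+K_c)$, summing over all $I$ gives $O(K_c m \dlm)$.
\item Applying the MLPs to the $h$ pooled vectors costs $O(dh\dlm)$.
\item In the losses, each of the $h$ interaction sets contrasts against $n_s$ negatives in $d$ dimensions, costing $O(n_s h d)$. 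The in-batch CLIP alignments (Eqs.~\eqref{eq:loss:align:inter}, \eqref{eq:loss:align:item}) cost $O(|\batch|^2 d)$ per batch, which totals $O(|\batch| h d)$ over $T_i$ batches.
\end{enumerate}
Backpropagation has the same order as the forward pass, so summing these terms yields the stated per-epoch bound.

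\textbf{Inference.} Computing $\emb{i}$ for all $n$ items costs $O(\dlm d n)$. Computing $\emb{I_q}$ costs $O(|I_q|\dlm + \dlm d)$, which is dominated by the first term. The $n$ cosine scores cost $O(nd)$. A heap-based \texttt{argtop}$_K$ then takes $O(n\log K)$.

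\textbf{Main obstacle.} The delicate steps are the co-occurrence bound $O(mn)$ and the $K_c m$ term in training. Both depend on bounding $\sum_I |I|^2$ by $mn$ and on the augmented-set inequality $|I^+|\le(1+K_c)|I|$, since $|\cand{i}|\le K_c$ after filtering. I would state these two inequalities explicitly as the key lemmas of the proof.
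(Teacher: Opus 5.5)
Your training and inference accounting matches the paper's proof term by term, including the bound $|I^+|\le(1+K_c)|I|$ behind the $K_c m\dlm$ term and the $O(|\batch|^2 d)$ per-batch alignment cost. The one genuine difference is how you obtain the $O(mn)$ co-occurrence term.

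The paper analyzes Algorithm~\ref{alg:preprocessing} as written. It builds sorted inverted lists $H(i)$ in $O(m)$ and computes each $\simf{B}(i,j)=|H(i)\cap H(j)|$ by a merge in $O(|H(i)|+|H(j)|)$. Summing over pairs gives $\sum_{i\le j}(|H(i)|+|H(j)|)=(n+1)m$. You instead accumulate counts by scanning interaction sets, paying $\sum_I|I|^2\le mn$.

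Your route is exactly the sparse product $\matt{A}\mat{A}$ that the paper says it uses in its implementation. It is also sharper, since it only touches co-occurring pairs and $\sum_I|I|^2$ is typically far below $mn$. The paper's route has the advantage of matching the pseudocode line by line and needing only $O(n)$ working memory per item $i$, which its companion space bound relies on. So if you keep your method, accumulate per item $i$ into an $O(n)$ counter array, reading the sets $I\ni i$ from inverted lists built in $O(m)$, rather than storing an $n\times n$ table. The time is unchanged, but the full table costs $O(n^2)$ space.

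There are two small accounting fixes:
\begin{enumerate}
\item You need not shift the $O(K_c m)$ interaction-augmentation cost into training. It is part of Algorithm~\ref{alg:preprocessing} and is already absorbed by $mn$, since $K_c\le n$.
\item You should count the sampling of $p$ and $N_I$. The paper bounds it by $O(n_s)$ expected per interaction set via rejection sampling, which is absorbed by $O(n_s h d)$.
\end{enumerate}
Your explicit remark that backpropagation has the same order as the forward pass is a point the paper leaves implicit.
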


Assuming the hyperparameters, including $\dlm$, are fixed constants, the preprocessing phase takes $O(mn + n^2)$, whereas the training phase takes $O(m + h + n)$.
Although the preprocessing phase does not exhibit linear scalability, it is performed only once, and can be further optimized using approximate $k$-NN (see Appendix~\ref{appendix:limitations}).
In contrast, the training phase scales linearly w.r.t. $m$, $h$, and $n$, and the inference phase also scales linearly with $n$. 

\begin{table*}[t]
\vspace{-3mm}
\setlength{\tabcolsep}{2.73pt}
\small
\caption{
Recommendation performance in terms of Recall@$K$ and NDCG@$K$ ($K=20$) under the user-wise holdout split setting.
We indicate whether each method satisfies three design principles: \texttt{UF} (user-free), \texttt{IF} (ID-free), and \texttt{GF} (GNN-free).
The best non-LR results are in red, and best LR results (excluding \method) are in blue.
Note that only \method adheres to all design principles, achieves the best accuracy, and successfully processes all datasets where some LR-based methods fail.
}
\label{table:overall}
\begin{threeparttable}
\centering
\begin{tabular}{c|ccc|ccccccc|ccccccc}

\hline
\toprule
          & \multicolumn{3}{c|}{\textbf{Principle}} & \multicolumn{7}{c|}{\textbf{Recall@20}}                                   & \multicolumn{7}{c}{\textbf{NDCG@20}} \\ 
\textbf{Model} & \multicolumn{1}{c}{\texttt{UF}} & \multicolumn{1}{c}{\texttt{IF}} & \multicolumn{1}{c|}{\texttt{GF}} & \movie   & \book    & \video   & \baby    & \steam  & \beauty  & \health  & \movie   & \book    & \video   & \baby    & \steam   & \beauty  &\health  \\ \midrule
MF-BPR    & \xmark  & \xmark  & \cmark  & 0.0580  & 0.0436  & 0.0177  & 0.0150  & 0.1610 & 0.0063  &  0.0091   & 0.0533  & 0.0387  & 0.0095  & 0.0074 &0.1537  & 0.0032 & 0.0048    \\ 
FISM-BPR  & \cmark  & \xmark  & \cmark  & 0.0861  & 0.0623  & 0.0392  & 0.0150  & 0.1801  &0.0079 &  0.0104  & 0.0801  & 0.0540  & 0.0220  & 0.0076  & 0.1431 & 0.0044 & 0.0060   \\ 
LightGCN  & \xmark   & \xmark   & \xmark   & 0.0860  & 0.0712  & 0.0732  & 0.0359  & 0.2013 & 0.0201 &   0.0250 & 0.0754  & 0.0597  & 0.0409  & 0.0185  & 0.1518 & 0.0110  & 0.0142   \\ 
XSimGCL   & \xmark   & \xmark  & \xmark  & \cellcolor{red!10}0.0967  & \cellcolor{red!10}0.0818  & \cellcolor{red!10}0.0897 &  \cellcolor{red!10}0.0390   &\cellcolor{red!10}0.2245  &\cellcolor{red!10}{0.0253} & \cellcolor{red!10}{0.0299}    &  \cellcolor{red!10}0.0866  & \cellcolor{red!10}0.0690  & \cellcolor{red!10}{0.0500}  & \cellcolor{red!10}0.0203  & \cellcolor{red!10}0.1664 & \cellcolor{red!10}{0.0140} & \cellcolor{red!10}{0.0170}   \\ \midrule
RLMRec    & \xmark  & \cmark  & \xmark  & 0.1046  & 0.0905  & o.o.t.     & o.o.t.     & o.o.t.     & o.o.t.   &  o.o.t.  & 0.0942  & 0.0741  & o.o.t.     & o.o.t.     & o.o.t.     & o.o.t.  &   o.o.t. \\ 
AlphaRec  & \xmark  & \cmark  & \xmark  & \cellcolor{blue!10}0.1219  & \cellcolor{blue!10}{0.0991}  & \cellcolor{blue!10}{0.1088}  & \cellcolor{blue!10}{0.0391}  & \cellcolor{blue!10}{0.2360}  & o.o.m.  & o.o.m. & \cellcolor{blue!10}{0.1141} & \cellcolor{blue!10}{0.0829}  & \cellcolor{blue!10}{0.0605}  & \cellcolor{blue!10}{0.0210}  & \cellcolor{blue!10}{0.1884}  & o.o.m.   &  o.o.m. \\ \midrule
\method & \cmark  & \cmark    & \cmark    & \textbf{0.1267}  & \textbf{0.1014}  & \textbf{0.1111}  & \textbf{0.0412}  & \textbf{0.2402}  & \textbf{0.0361}  & \textbf{0.0325} & \textbf{0.1194}  & \textbf{0.0861}  & \textbf{0.0615}  & \textbf{0.0219}  & \textbf{0.1938}  & \textbf{0.0200}    & \textbf{0.0184}  \\
 \midrule
\multicolumn{4}{c|}{\% inc. over best LR}   & 3.94\%$^{*}$  & 2.32\%$^{*}$  & 2.11\%$^{*}$  & 5.37\%$^{*}$  & 1.78\%$^{*}$  & - & - & 4.65\%$^{*}$  & 3.86\%$^{*}$  & 1.65\%$^{*}$  & 5.71\%$^{*}$  & 2.89\%$^{*}$  & - & - \\ 
\multicolumn{4}{c|}{\% inc. over best non-LR} &  31.00\%$^{*}$  & 23.96\%$^{*}$  & 23.81\%$^{*}$  & 5.58\%$^{*}$  & 6.99\%$^{*}$  & 42.69\%$^{*}$  & 8.70\%$^{*}$ &  37.90\%$^{*}$  & 24.78\%$^{*}$  & 23.07\%$^{*}$  & 9.26\%$^{*}$  & 16.49\%$^{*}$ & 42.86\%$^{*}$ & 8.24\%$^{*}$    \\   
\bottomrule\hline 
\end{tabular}
\begin{tablenotes}[flushleft]
    {\item[] - Out-of-time (o.o.t.) indicates that the total time including preprocessing exceeded 50 hours, while out-of-memory (o.o.m.) denotes GPU memory overflow.}
    {\item[] - Percentage increases marked with $*$ are statistically significant ($p$-value $\leq 0.05$).}
    \end{tablenotes}
\end{threeparttable}
\end{table*}

\smallsection{Space complexities} \method's space complexities are as:
\begin{theorem}[Space Complexities of \method]
The preprocessing phase uses $O(\dlm n + K_c(m+n))$ space.
The training phase requires $O(m + d(n + \dlm + n_s + |\batch|))$ space.
The inference phase takes $O((\dlm + d)n + K)$ space. \qedsymbol
\label{theorem:space}
\end{theorem}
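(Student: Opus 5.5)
The plan is to account separately for the memory held by each of the three phases, treating hyperparameters ($\dlm$, $d$, $K_c$, $n_s$, $|\batch|$, $K$) as explicit factors. For each phase we list every object that must be resident simultaneously and sum their sizes. Throughout, $\traincol$ itself is stored as adjacency lists, which occupy $O(m)$ space.

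\textbf{Preprocessing.} First, the LRs $\{\lr{i}\}$ from Eq.~\eqref{eq:lm} are stored in $O(\dlm n)$ space.

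Second, the behavioral similarities of Eq.~\eqref{eq:sim:b} are never materialized as an $n\times n$ matrix. Instead, they are computed one query item $i$ at a time:
\begin{itemize}[leftmargin=4mm]
  \item build an $O(n)$ buffer of co-occurrence counts for $i$;
  \item extract $\mathcal{C}_i$ with a size-$K_c$ heap;
  \item compute $\mu_i$ in Eq.~\eqref{eq:sim:bs} as a running sum, using $O(1)$ extra space beyond the LRs;
  \item discard the buffer before moving to the next item.
\end{itemize}
The only persistent outputs are the sets $\cand{i}$ with their weights $\simf{B}(i,j)$, which take $O(K_cn)$ space.

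Third, the augmented LRs $\lr{i}^{+}$ of Eq.~\eqref{eq:representation_aug} add another $O(\dlm n)$. The augmented interaction sets satisfy $|I^{+}|\le |I|(1+K_c)$, so summing over $\traincol$ gives $O(K_c m + m)$.

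Together these yield $O(\dlm n + K_c(m+n))$.

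\textbf{Training.} The key point is that the MLP parameters $\Theta$ consist of two two-layer networks, whose size is $O(d\,\dlm + d^2)$, i.e., $O(d\,\dlm)$ when $d\le\dlm$.

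Per batch, the model stores the item embeddings $\emb{i}$ and $\emb{i}^{+}$ for all items, which is $O(dn)$ as described in the footnote on batch computation. It also stores:
\begin{itemize}[leftmargin=4mm]
  \item the interaction embeddings $\emb{I},\emb{I}^{+}$ for $I\in\batch$, which take $O(d|\batch|)$;
  \item the negative embeddings, which take $O(dn_s)$;
  \item the similarity logits of Eqs.~\eqref{eq:loss:align:inter} and \eqref{eq:loss:align:item}.
\end{itemize}
The logits need care: storing the full $|\batch|\times|\batch|$ matrices would cost $O(|\batch|^2)$. To stay within $O(d|\batch|)$ we would have to argue row-wise evaluation or $|\batch|\le d$.

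The interaction data needed for training is $O(m)$. It is here that the augmented sets must not contribute $K_cm$. I would argue that the means $\texttt{mean}(\{\lr{i}\mid i\in I^{+}\})$ are formed on the fly from $I$ and the stored $\cand{i}$ lists, or precomputed only per batch. The LR tables are treated as read-only inputs carried over from preprocessing, and are absorbed into the $d\,\dlm$ or model-input term as the statement implies.

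Gradients and optimizer state are proportional to the parameters and activations, so they only add constant factors.

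\textbf{Inference.} Only $\theta_{\texttt{MLP}}$ is needed, which is $O(d\,\dlm)$. Beyond that, inference stores:
\begin{itemize}[leftmargin=4mm]
  \item the LRs, $O(\dlm n)$;
  \item all item embeddings $\emb{i}$, $O(dn)$;
  \item the query embedding, $O(d)$;
  \item the $n$ scores of Eq.~\eqref{eq:inference}, $O(n)$;
  \item a top-$K$ heap, $O(K)$.
\end{itemize}
This gives $O((\dlm+d)n+K)$.

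The main obstacle is the training bound. Reconciling the stated $O(m+d(n+\dlm+n_s+|\batch|))$ with the augmented data $I^{+}$ and $\lr{i}^{+}$ requires being explicit about which precomputed structures count as inputs rather than working memory, and about how the batch-level contrastive logits are evaluated. I would resolve this first, by fixing these accounting conventions, before carrying out the routine summations.
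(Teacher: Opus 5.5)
Your per-phase inventory is the same argument the paper makes. Your preprocessing and inference accounts agree with its proof item for item:
\begin{itemize}
\item preprocessing: the LR tables, $O(m)$ interaction lists, $O(K_cm)$ for the sets $I^{+}$, $O(K_cn)$ for the $\cand{i}$, and one $O(n)$ similarity buffer reused across query items;
\item inference: all item embeddings plus an $O(K)$ heap.
\end{itemize}
The training phase is where you stop short. What is missing there is not an idea from the paper's proof but an accounting convention that the paper adopts without stating it.

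\textbf{Alignment logits.} The paper evaluates each local loss ``in an in-place manner.'' That is exactly your row-wise evaluation, and it keeps the extra space at $O(|\batch|d)$. Commit to that option. The alternative $|\batch|\le d$ is an extra hypothesis that fails in the paper's own setting ($|\batch|=4096$, $d=64$).

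\textbf{Stored data.} The paper charges the following and nothing else:
\begin{itemize}
\item $O(m)$ for $\traincol$, with no $K_c$ factor for the sets $I^{+}$;
\item $O(\dlm d)$ per MLP layer;
\item $O(dn)$ for $\{\emb{i}\}$ and $\{\emb{i}^{+}\}$;
\item $O(dn_s)$ and $O(|\batch|d)$ for the loss buffers.
\end{itemize}
It charges nothing at all for $\{\lr{i}\}$ and $\{\lr{i}^{+}\}$.

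\textbf{The LR tables.} Your claim that the LR tables are ``absorbed into the $d\,\dlm$ term'' is false. The tables take $\dlm n$ space, which is not $O(d\,\dlm)$ since $n\gg d$. No reorganization avoids touching every $\lr{i}$ and $\lr{i}^{+}$, because line~4 of Algorithm~2 applies both MLPs to all items in every batch. So the stated training bound holds only if these tables are treated as read-only inputs outside the working memory. State that as an explicit assumption rather than trying to derive it. Note also that the paper's own inference bound does charge $O(\dlm n)$ for the LRs, so the paper does not apply this convention uniformly.

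\textbf{The sets $I^{+}$.} Your on-the-fly reconstruction of $I^{+}$ from $I$ and the $\cand{i}$ lists is a legitimate way to keep the interaction data at $O(m)$. It is more careful than the paper, which in effect treats $K_c$ as a constant. To complete it, add two points:
\begin{itemize}
\item the $O(K_cn)$ for the lists fits inside $O(dn)$ when $K_c\le d$;
\item deduplicating each union needs only an $O(n)$ marker array and an $O(\dlm)$ accumulator, both within the stated bound.
\end{itemize}
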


The preprocessing and training phases use $O(m + n)$ space, while the inference phase uses $O(n)$ space when the hyperparameters, including $\dlm$, are fixed constants.
Note that LightGCN and AlphaRec use $O(n + h + m)$ space for training and inference, where $h$ and $m$ denote the numbers of users and edges, while \method reduces $O(h)$ space in training and $O(h + m)$ in inference.

\section{Experiments}
\label{sec:exp}
We present experimental results to evaluate the effectiveness of \method compared with state-of-the-art baselines for Problem~\ref{prob}.

\vspace{-2mm}
\subsection{Experimental Settings}

\smallsection{Datasets and competitors}
We conducted experiments on seven real-world datasets in Table~\ref{tab:datasets}, where details of each dataset are in Appendix~\ref{appendix:settings}.
We group the baselines of \method as follows:
\begin{itemize}[leftmargin=4mm]
    \item {
        \textbf{Non-LR-based methods.} We considered MF-BPR~\cite{KorenBV09}, a widely used  baseline that stores user embeddings, and FISM-BPR~\cite{KabburNK13}, a well-known user-free model.
        We included LightGCN~\cite{0001DWLZ020} and XSimGCL~\cite{YuXCCHY24}, which are popular GNN-based approaches.
        All of these methods are ID-dependent, not using LRs.
    }
    \item{
        \textbf{LR-based methods.} We included RLMRec~\cite{RenWXSCWY024} and AlphaRec \cite{Sheng0ZCWC25}, which are state-of-the-art ID-free approaches. 
        Details of the LMs used in this section are  in Appendix~\ref{appendix:settings}, with additional analyses on the effect of different LMs provided in Appendix~\ref{appendix:Effect_of_LM}.
    }
\end{itemize}

\begin{table}
\centering
\caption{
Statistics of real-world recommendation datasets. 
\label{tab:datasets}
}
\resizebox{1.015\linewidth}{!}{%
\small
\setlength{\tabcolsep}{1pt}
\hspace{-1.15mm}
\begin{tabular}{lrrrrrrr}
\hline
\toprule

\textbf{Datasets} & \movie & \book & \video & \baby & \steam & \beauty &\health  \\ 
\midrule
\bf \#Users & 26,073   & 71,306   & 94,762   & 150,777  & 334,730 & 729,576 & 796,054   \\
\bf \#Items & 12,464   & 40,523   & 25,612   & 36,013   & 15,068   & 207,649 & 184,346   \\
\bf \#Inter. & 875,906  & 2,206,865 & 814,586 & 1,241,083 & 4,216,781  & 6,624,441 & 7,176,552 \\

\bottomrule 
\hline
\end{tabular}
}
\end{table}

\smallsection{Training and evaluation protocol}
Each dataset was divided into training, validation, and test sets, where the detailed ratios and splitting strategies are provided in the following subsections.
We tuned the hyperparameters of each model, including \method, on the validation set using Recall@20.
Using the selected hyperparameters, we then evaluated the models on the test set in terms of Recall@20 and NDCG@20.
We ran each experiment five times with different random seeds and reported the average performance.
Refer to Appendix~\ref{appendix:settings} for details of the settings.

\smallsection{Reproducibility} The code and datasets used in the paper are available at \url{https://github.com/minseojeonn/AlphaFree}.

\vspace{-2mm}
\subsection{Recommendation Performance}
\label{sec:exp:overall}

We evaluated the performance of top-$K$ recommendation following recent studies~\cite{0003CSWC24,Sheng0ZCWC25}, using a user-wise holdout split. 
Specifically, we randomly divided each user’s interaction history into training, validation, and test sets with a ratio of 4:3:3.
The results are reported in Table~\ref{table:overall}, and we observe the following findings:
\begin{itemize}[leftmargin=4mm]
    \item{
        \method outperforms LR-based methods with gains of up to 5.37\% (Recall@20) and 5.71\% (NDCG@20), and significantly outperforms non-LR-based ones with improvements of up to 42\% on both metrics, while uniquely satisfying all design principles.
    }
    \item{
        Among LR-based methods, only \method successfully handles all datasets, whereas others fail on large ones, with RLMRec suffering from out-of-time even on smaller datasets and AlphaRec running out-of-memory on larger ones such as \beauty and \health, highlighting the advantage of being user-free.
    }
    \item{
        \method benefits from being ID-free, as ID-dependent methods such as MF-BPR and FISM-BPR yield  lower accuracy, indicating that LR-based (ID-free) approaches, including \method, provide richer semantics and better performance.
    }
    \item{
        By being GNN-free, \method consistently outperforms GNN-dependent methods, including LightGCN and AlphaRec, in both accuracy and efficiency.
    }
\end{itemize}
\vspace{-3mm}

\begin{figure}[t!]
    \vspace{-2mm}
    \includegraphics[width=0.7\linewidth]{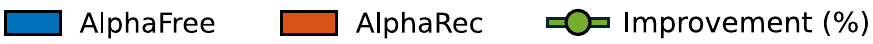}
    \\
    \centering
    \begin{subfigure}[b]{0.317\linewidth}
        \includegraphics[width=\linewidth]{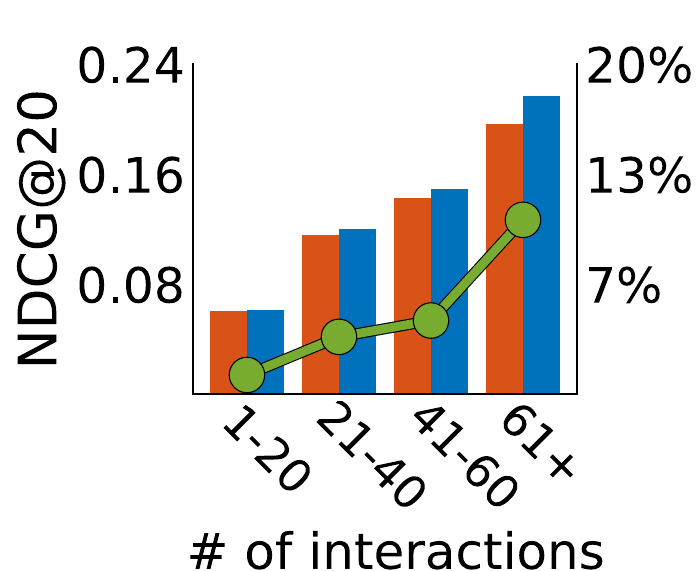}
        \caption{\video}
    \end{subfigure}
    \hfill
    \begin{subfigure}[b]{0.30\linewidth}
        \includegraphics[width=\linewidth]{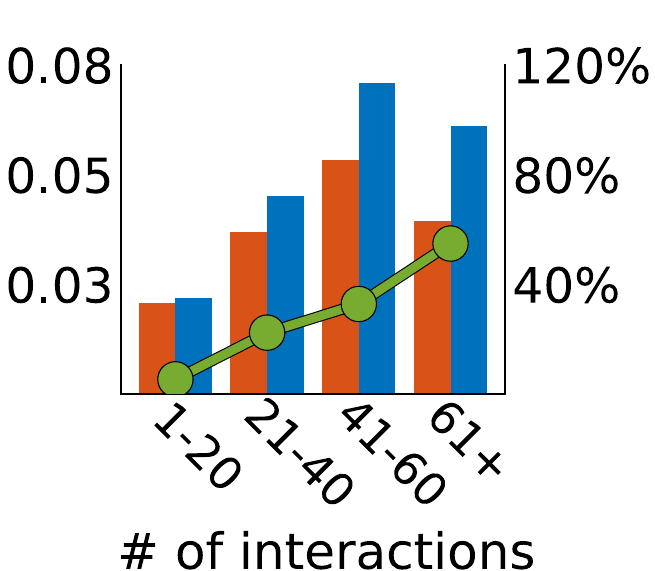}
        \caption{\baby}
    \end{subfigure}
    \hfill
    \begin{subfigure}[b]{0.317\linewidth}
        \includegraphics[width=\linewidth]{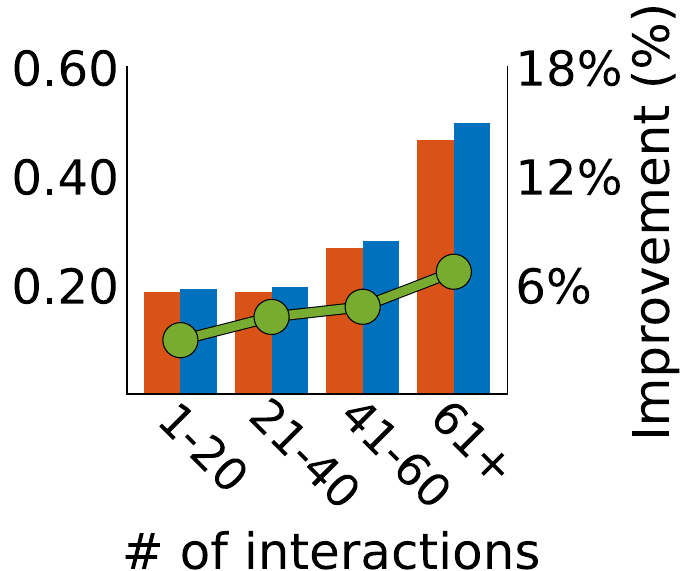}
        \caption{\steam}
    \end{subfigure}
    \caption{
        Performance of \method and AlphaRec in NDCG@20 across groups with different numbers of interactions. 
        The {green} line indicates the relative improvement of \method over AlphaRec.
        \method shows consistent improvements, which grow with more interactions.
    }
    \label{fig:exp:inter_groups}
\end{figure}

\subsection{Performance across Interaction Groups}
We analyzed performance across interaction groups to investigate the effect of being GNN-free.
To this end, we compared \method with AlphaRec, the second-best  model (Table~\ref{table:overall}), which also utilizes item LRs but still relies on GNNs.
We classified each $I \in \traincol$ into one of four groups (i.e., 1–20, 21–40, 41–60, and 61+) based on $\card{I}$ and measured NDCG@20 for each group. 
Figure~\ref{fig:exp:inter_groups} shows the results, with the percentage increase of ours over AlphaRec shown on the right y-axis.
\method consistently outperforms AlphaRec across all groups, and the relative improvement becomes larger as $|I|$ increases.
In particular, the improvement in the 61+ group (or heavy users) is more pronounced than in the other groups, indicating that AlphaRec using GNNs shows limited performance for those users as their representations become excessively averaged (i.e., over-smoothed), whereas our GNN-free design avoids such an issue.

\vspace{-2mm}
\subsection{Generalization to Unseen Interaction Sets}
\label{sec:exp:generalization_unseen_interaction}
We evaluated the generalization performance of \method on interaction sets unseen during training, i.e., in a user cold-start setting\footnote{
In this experimental setup, each new user was given on average 1.09–2.95 interactions as a query at test time (see Appendix~\ref{appendix:settings} for details).
}.
For each dataset, we first split the users into training, validation, and test sets with an 8:1:1 ratio.
For evaluation, we sampled 10\% of interactions (at least one) from each user’s set $I$ in the validation and test data to form $I_q$, and used the remaining ones as ground truth.
%
From Table~\ref{tab:unseen}, we observe the following:
\begin{itemize}[leftmargin=4mm]
    \item{
        \method shows consistently superior generalization performance compared to the tested baselines, achieving improvements of up to 38.65\% in Recall@20 and 35.95\% in NDCG@20, respectively.
    }
    \item{
        As a user-dependent method, AlphaRec shows limited generalization compared to \method, indicating that it struggles to learn effective embeddings for new users.
    }
    \item{
        For ID-dependent methods, the user-free FISM-BPR performs better than LightGCN, whereas \method further improves performance by leveraging LRs. 
    }
\end{itemize}
These verify that user-free architectures generalize more effectively to new users with few interactions, while leveraging LRs provides additional gains beyond ID- and user-dependent models.

\
\begin{table}[t!]
\caption{
Performance on interaction sets of new users unseen during training (best in bold; second-best underlined).
\method outperforms the tested baselines by effectively recommending for new users. 
}
\label{tab:unseen}
\small
\setlength{\tabcolsep}{1.3pt}
\begin{tabular}{c|rrrr|rrrr}
\hline
\toprule
\multirow{2}{*}{\textbf{Model}} & \multicolumn{4}{c|}{\textbf{Recall@20}} & \multicolumn{4}{c}{\textbf{NDCG@20}} \\
                         & \movie & \book &\video & \baby &  \movie & \book  & \video & \baby \\ \midrule
FISM-BPR               & 0.0570 &0.0494&	0.0601&	0.0135&	0.0853&	0.0601&	0.0397&	0.0090
\\
LightGCN                &0.0401 & 0.0302&	0.0438&	0.0125&	0.0614	&0.0423	&0.0274	&0.0076 
\\      
AlphaRec                & \underline{0.0681} & \underline{0.0621} & \underline{0.0732} & \underline{0.0216} & \underline{0.0980} & \underline{0.0837} & \underline{0.0474} & \underline{0.0138} \\
\midrule
\method                 & \textbf{0.0706} & \textbf{0.0646} & \textbf{0.0922} & \textbf{0.0300} & \textbf{0.1027} & \textbf{0.0850} & \textbf{0.0596} & \textbf{0.0188} \\ \midrule
\% inc.                & 3.63\% & 3.94\% & 25.90\% & 38.65\% & 4.89\% & 1.51\% & 25.71\% & 35.95\% \\ 
\bottomrule
\hline
\end{tabular}
\end{table}

\begin{figure}[t]
    \centering
    \includegraphics[width=0.89\linewidth]{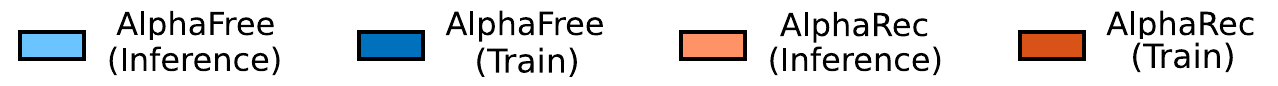}   
    \includegraphics[width=0.95\linewidth]{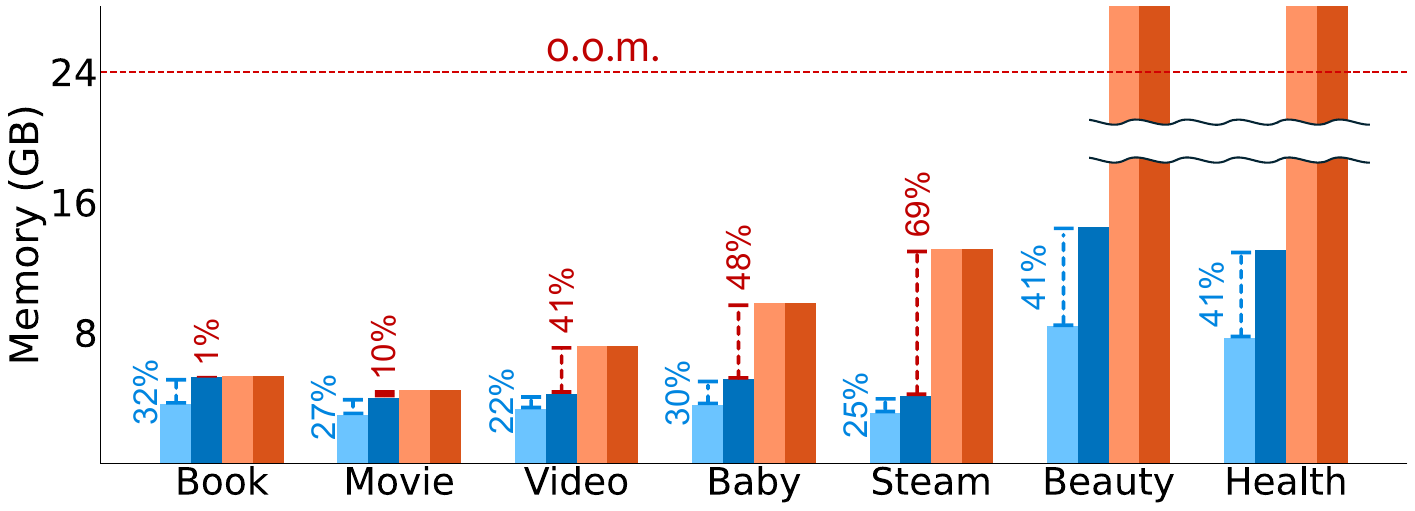}
    \caption{
        GPU memory usage of \method and AlphaRec during training and inference, showing \method consistently requires less memory, and AlphaRec runs out of memory (o.o.m.) on large datasets such as \beauty and \health.
    }
    \label{fig:exper:vram_usage_LR_compare}
    \vspace{2mm}
\end{figure}

\begin{figure}[t]
    \centering
    \includegraphics[width=0.9\linewidth]{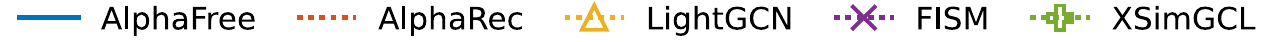}
    \centering
    \begin{subfigure}[b]{0.49\linewidth}
        \includegraphics[width=0.9\linewidth]{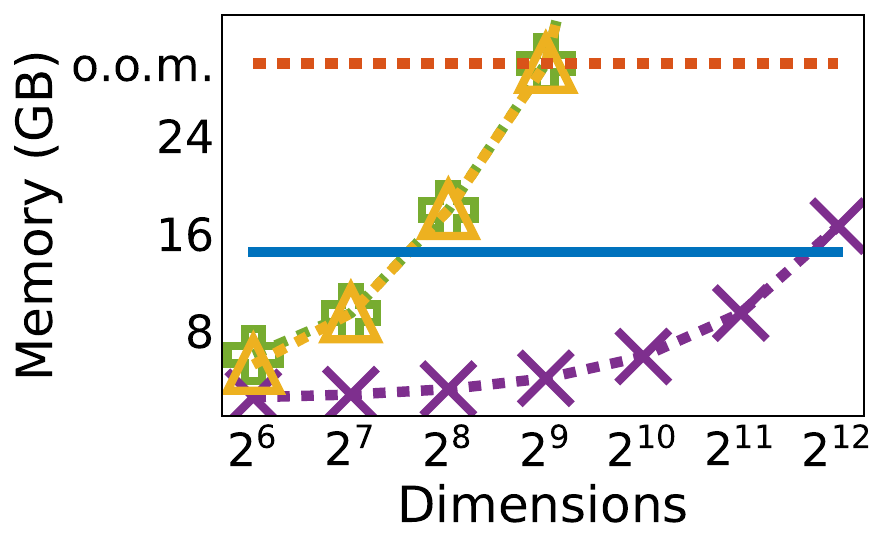}
        \caption{Train}
    \end{subfigure}
    \begin{subfigure}[b]{0.49\linewidth}
        \includegraphics[width=0.9\linewidth]{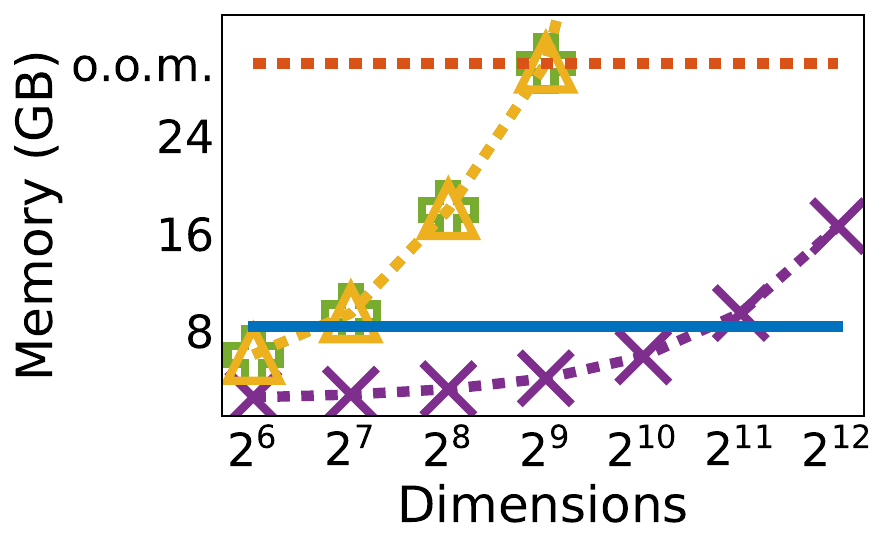}
        \caption{Inference}
    \end{subfigure}
    \caption{
        GPU memory usage on \beauty under varying embedding dimensions. For \method and AlphaRec, the dimension is fixed by the language model ($\dlm=2^{12}$), where AlphaRec runs out-of-memory. Other baselines require less memory at smaller dimensions but grow rapidly, eventually surpassing \method after $\dlm$.
    } \label{fig:exp:usage:dim}
\end{figure}

\vspace{-4.5mm}
\subsection{Space Efficiency}

We analyze the space efficiency of \method gained from its user-free design.
The runtime analysis is provided in Appendix~\ref{appendix:runtime}.

\smallsection{Memory usage over LR-based methods}
For LR-based models, language models (LMs) produce LRs of a fixed size $\dlm$ that cannot be easily adjusted, which becomes a major  bottleneck in space.
Thus, we analyzed the average GPU memory usage per epoch of \method and AlphaRec during both training and inference, where they used the same LRs.
As shown in Figure~\ref{fig:exper:vram_usage_LR_compare}, \method uses less memory than AlphaRec, which fails to run on larger datasets such as \beauty and \health.
In particular, \method reduces the usage by up to 69\% compared to AlphaRec during training.
This results from the user-free design of \method (i.e., not storing user embeddings), unlike AlphaRec.
We also observe that \method requires up to 41\% less memory for inference compared to training, as the inference phase uses only the original-view encoder.

\smallsection{Memory usage w.r.t. embedding dimensions}
Unlike the above LR-based models, the embedding dimension $d$ of traditional methods can be easily adjusted.
For \method and AlphaRec, $\dlm = 2^{12}$ was fixed by the LM,
while we varied $d$ for the other methods from $2^6$ to $2^{12}$ to examine their memory usage on the \beauty dataset.
As shown in Figure~\ref{fig:exp:usage:dim}, they use less GPU memory at smaller $d$, but their usage increases as $d$ grows.
In particular, LightGCN and XSimGCL (i.e., user-dependent) show steeper growth than FISM-BPR (i.e., user-free).
\method uses more memory than those methods with small $d$ due to high-dimensional LRs, but it achieves much higher accuracy (Section~\ref{sec:exp:overall}), whereas those methods overfit at large dimension $d$.

\vspace{-1mm}
\subsection{Ablation Study}
We further analyze the contribution of each component through ablation studies: 
\textbf{w/o semantic filtering} disables the semantic filtering module used for selecting similar items;
\textbf{w/o inter-align} removes the interaction-level alignment loss 
$\mathcal{L}_\text{inter-align}$; 
\textbf{w/o item-align} removes the item-level alignment loss 
$\mathcal{L}_\text{item-align}$;  and 
\textbf{w/o all-align} removes all alignment losses by setting $\lambda_\text{align}=0$, and uses only $\mathcal{L}_\text{InfoNCE}$.
Figure~\ref{fig:ablation} shows the results of our ablation studies, where removing any component leads to performance degradation, verifying that each module contributes positively to \method with varying impacts across datasets.
Note that the full version of \method consistently achieves the best performance.

\subsection{Effect of Hyperparameters}
\label{sec:hyper_sens}
We analyzed the effect of $K_c$ and $\lambda_\text{align}$, which controls the number of similar items and the strength of the cross-view alignment.
Since data density can affect augmentation, we selected \movie\ and \baby, whose densities are 0.27\% and 0.02\%, to examine the effects on datasets with different densities.

\smallsection{Effect of $K_c$} 
As shown in Figure~\ref{fig:hyperparam_sense}, \method benefits from a larger $K_c$ in the sparser \baby\ dataset, whereas \movie\ achieves the best performance around $K_c=5$, indicating that a larger $K_c$ is not good for denser datasets. 

\smallsection{Effect of $\lambda_\text{align}$}
In the denser \movie\ dataset, larger $\lambda_\text{align}$ improves performance by enforcing consistency between the two views.
In the sparser \baby\ dataset, too strong alignment degrades performance, as the augmentations in sparse data may contain noise; in this case, a weaker alignment can be helpful.

\begin{figure}[t]
    \centering
    \includegraphics[width=0.85\linewidth]{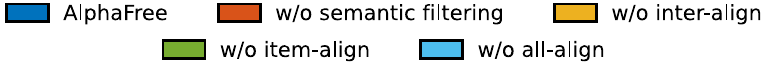}
    \\
    \begin{subfigure}[b]{0.3\linewidth}
        \includegraphics[width=\linewidth]{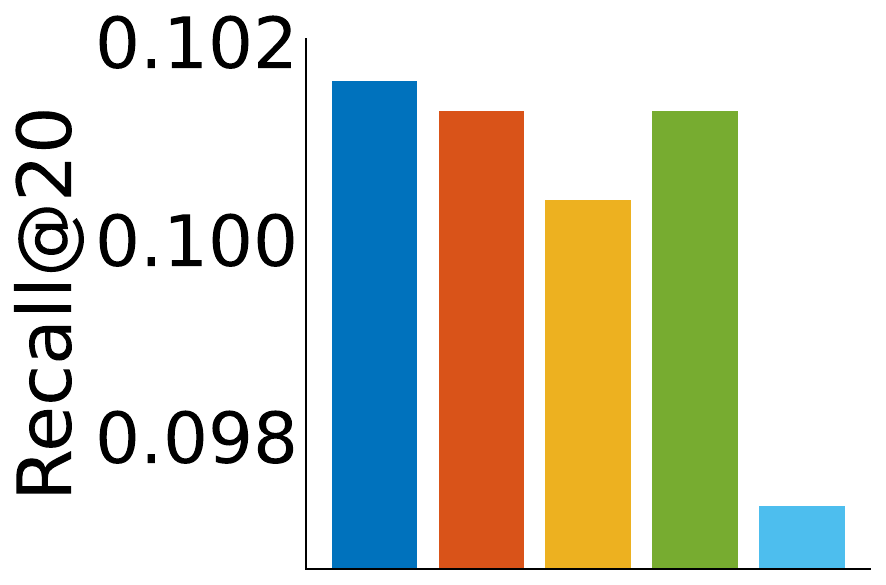}
        \caption{\book}
    \end{subfigure}
    \begin{subfigure}[b]{0.3\linewidth}
        \includegraphics[width=\linewidth]{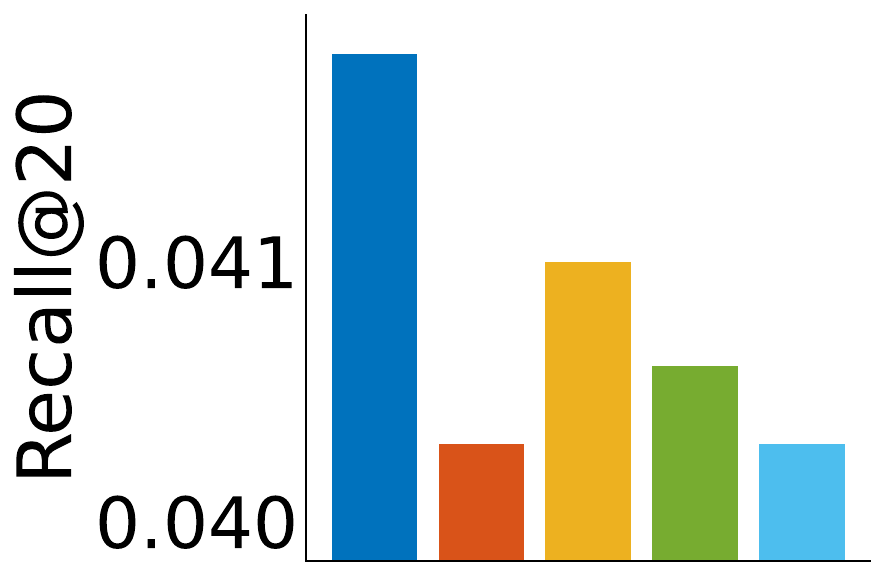}
        \caption{\baby}
    \end{subfigure}
    \begin{subfigure}[b]{0.3\linewidth}
        \includegraphics[width=\linewidth]{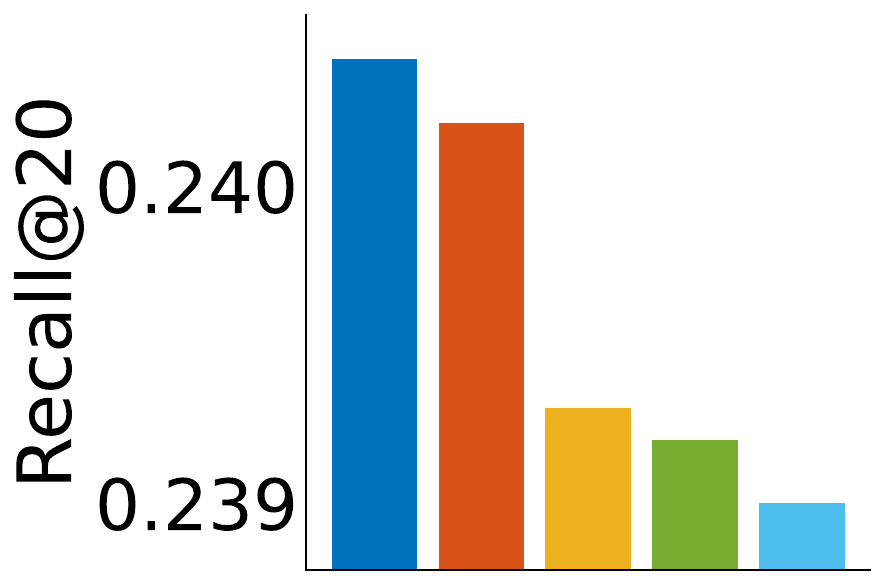}
        \caption{\steam}
    \end{subfigure}
    \caption{
    Ablation results showing that each module contributes positively, with varying effects across datasets, and the full model (\method) performs the best.}
    \label{fig:ablation}
    \vspace{2mm}
\end{figure}

\begin{figure}[t]
    \centering
    \begin{subfigure}{0.49\linewidth}
        \centering
        \includegraphics[width=\linewidth]{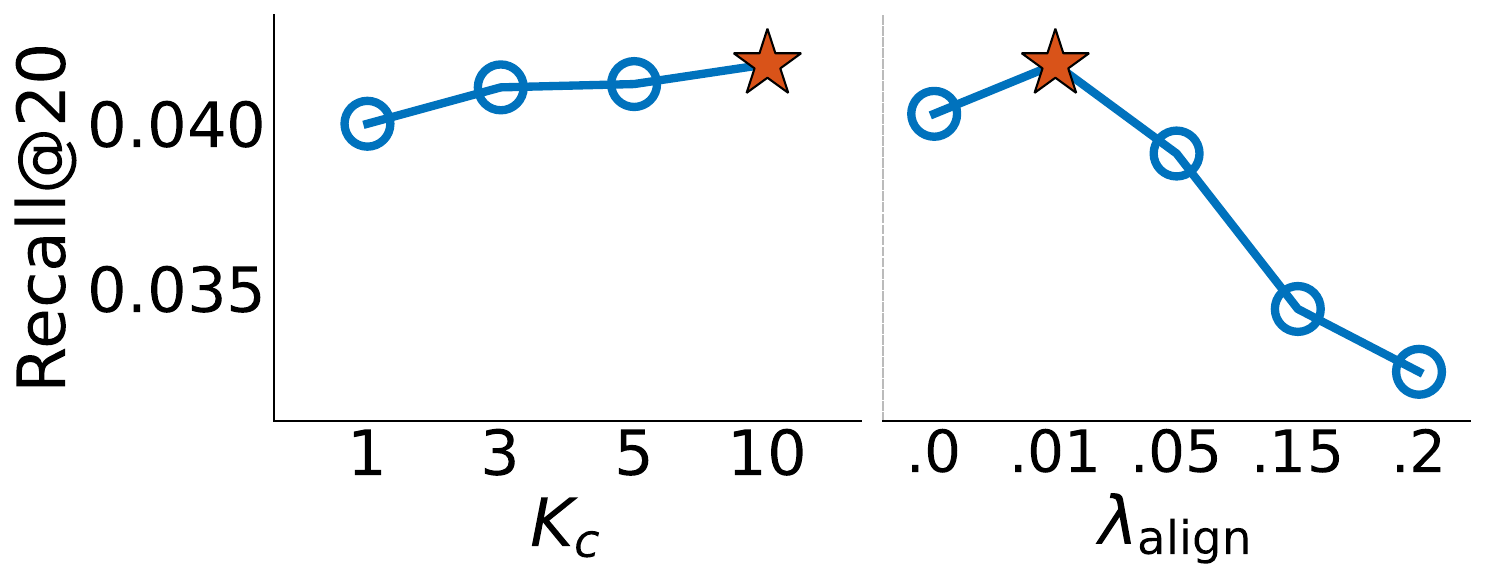}
        \caption{\baby}
        \label{fig:hyper_sense:a}
    \end{subfigure}
    \begin{subfigure}{0.49\linewidth}
        \centering
        \includegraphics[width=\linewidth]{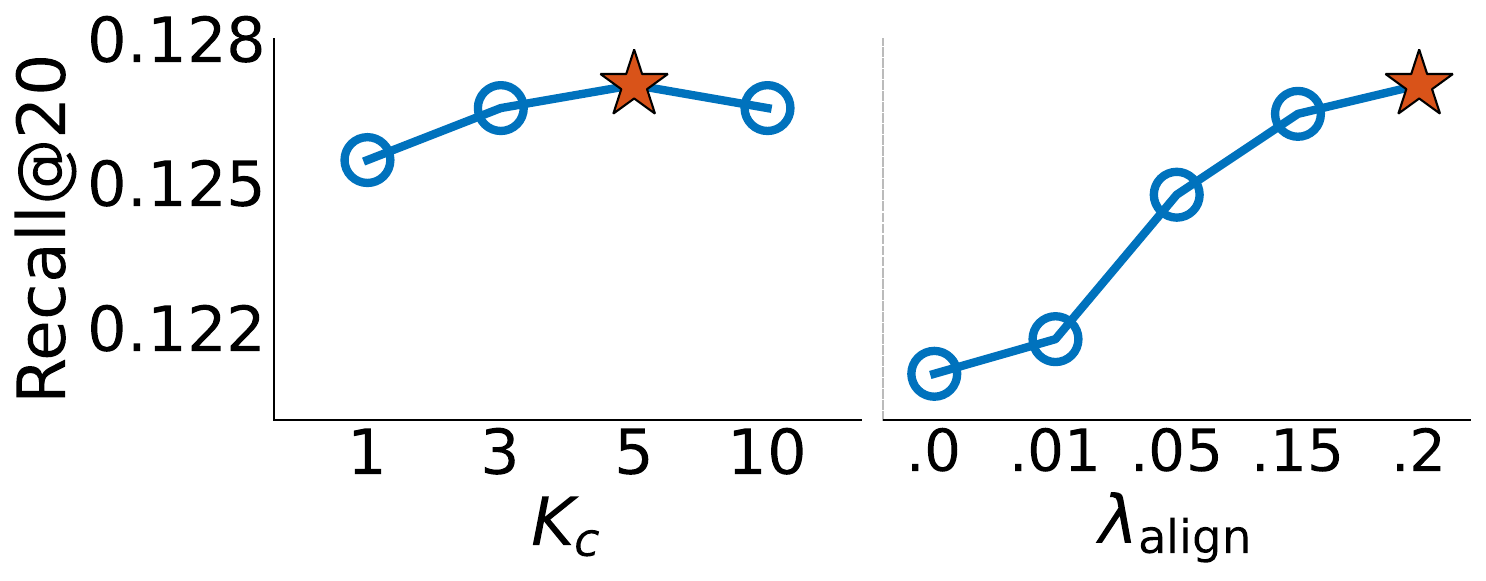}
        \caption{\movie}
        \label{fig:hyper_sense:b}
    \end{subfigure}
    \hfill 
    \caption{
        Effects of hyperparameters $K_c$ and $\lambda_{\textnormal{align}}$, where the best points are marked with a star symbol.
        The trends differ across datasets, suggesting that proper tuning of them is required to adapt to various data characteristics.
    } 
    \label{fig:hyperparam_sense}
\end{figure}

\subsection{Qualitative Case Study}
\label{sec:exp:case}
\begin{figure}
    \centering
    \includegraphics[width=\linewidth]{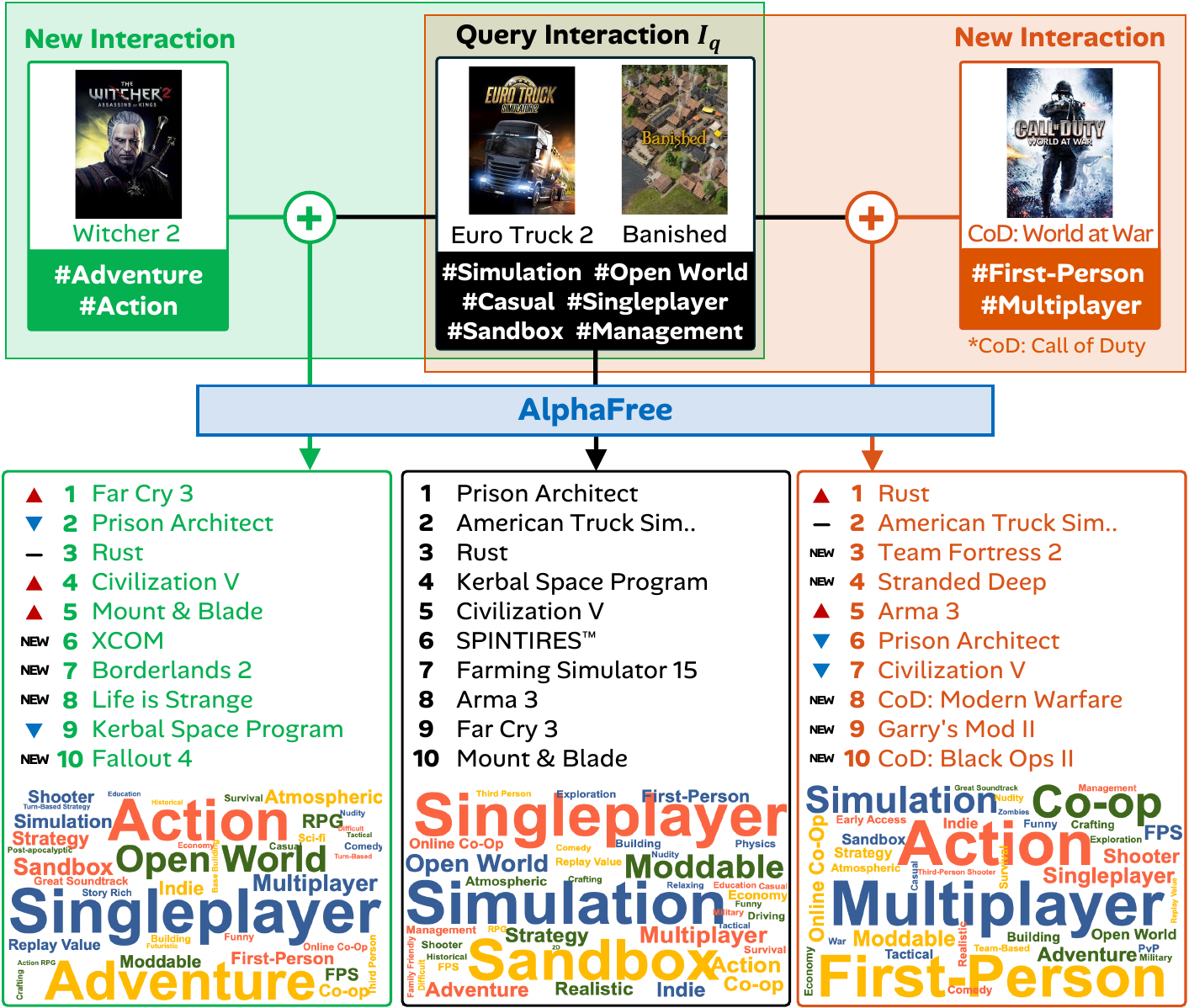}
    \caption{
        Case study on the \steam dataset, where the middle shows the results for unseen $I_q$, and the left and right show updates after adding a new item into $I_q$ with word clouds of user tags for recommended games.
        \label{fig:exp:case}
    }
\end{figure}

We conducted a case study to qualitatively verify the effectiveness of the user-free and ID-free designs of \method.
To this end, we trained \method on $\traincol$ of the \steam dataset and examined how it adapts to new interactions.
Specifically, we set an unseen $I_q = \{\textsc{EuroTruck}, \textsc{Banished}\}$, whose context is characterized by several tags such as \texttt{\#Simulation} and \texttt{\#Sandbox}. 
The resulting recommendations of \method for $I_q$ are shown in the middle of Figure~\ref{fig:exp:case}, where these tags are dominant, indicating \method  effectively captures the context of $I_q$.
Next, adding \textsc{Witcher2} (\texttt{\#Action} and \texttt{\#Adventure}) to $I_q$ and feeding it to \method without retraining yields the left recommendation list, where action games rise or newly appear, and those tags become dominant.
In contrast, when adding \textsc{CoD} to $I_q$, \method yields the right result, where FPS games rise or newly appear, and the \texttt{\#First-Person} and \texttt{\#Multiplayer} tags become dominant.
These results show that \method effectively captures query contexts (ID-free) and adapts to new interactions (user-free).

\vspace{-2mm}
\section{Conclusion}
\label{sec:conclusion}
In this paper, we proposed \method, a novel top-$K$ recommendation framework that is free from users, IDs, and GNNs.
Throughout this work, we demonstrate that even a simple and lightweight model, such as MLPs, can achieve strong recommendation performance by removing dependencies on users, IDs, and GNNs.
Specifically, \method augments each item and interaction set with behaviorally and semantically similar items, and contrastively aligns the original and augmented views during training, enabling the model to leverage high-order CF signals without relying on GNNs.
At inference, only the original-view encoder consisting of a lightweight MLP is used, providing on-the-fly recommendations for a given query, while remaining free from users, IDs, and GNNs.
Extensive experiments on various benchmarks and settings demonstrate that \method outperforms state-of-the-art methods, achieving up to around 40\% higher accuracy over non-LR-based methods and 5.7\% higher accuracy over LR-based methods, while reducing GPU memory usage by up to 69\%. 
We discuss the limitations and future directions of \method in Appendix~\ref{appendix:limitations}.

\vspace{-2mm}
\section*{Acknowledgements}
We would like to thank all anonymous reviewers for their dedicated reviews and comments.
This work was supported by the Institute of Information \& Communications Technology Planning \& Evaluation (IITP) - Innovative Human Resource Development for Local Intellectualization program grant funded by the Korea government (MSIT) (IITP-2026-RS-2022-00156360, 50\%) and 
the Convergence security core talent training business support program (IITP-2026-RS-2024-00426853, 50\%).

\bibliographystyle{ACM-Reference-Format}
\bibliography{myref-short}

@inproceedings{KabburNK13,
  author       = {Santosh Kabbur and
                  Xia Ning and
                  George Karypis},
  title        = {{FISM:} factored item similarity models for top-N recommender systems},
  booktitle    = {SIGKDD},
  year         = {2013}
}

@inproceedings{0001DWLZ020,
  author       = {Xiangnan He and
                  Kuan Deng and
                  Xiang Wang and
                  Yan Li and
                  Yong{-}Dong Zhang and
                  Meng Wang},
  title        = {LightGCN: Simplifying and Powering Graph Convolution Network for Recommendation},
  booktitle    = {SIGIR},
  year         = {2020}
}

@article{YuXCCHY24,
  author       = {Junliang Yu and
                  Xin Xia and
                  Tong Chen and
                  Lizhen Cui and
                  Nguyen Quoc Viet Hung and
                  Hongzhi Yin},
  title        = {XSimGCL: Towards Extremely Simple Graph Contrastive Learning for Recommendation},
  journal      = {{IEEE} Trans. Knowl. Data Eng.},
  volume       = {36},
  number       = {2},
  pages        = {913--926},
  year         = {2024}
}

@inproceedings{RenWXSCWY024,
  author       = {Xubin Ren and
                  Wei Wei and
                  Lianghao Xia and
                  Lixin Su and
                  Suqi Cheng and
                  Junfeng Wang and
                  Dawei Yin and
                  Chao Huang},
  title        = {Representation Learning with Large Language Models for Recommendation},
  booktitle    = {WWW},
  year         = {2024}
}

@inproceedings{Sheng0ZCWC25,
  author       = {Leheng Sheng and
                  An Zhang and
                  Yi Zhang and
                  Yuxin Chen and
                  Xiang Wang and
                  Tat{-}Seng Chua},
  title        = {Language Representations Can be What Recommenders Need: Findings and
                  Potentials},
  booktitle    = {ICLR},
  year         = {2025}
}

@inproceedings{HeLZNHC17,
  author       = {Xiangnan He and
                  Lizi Liao and
                  Hanwang Zhang and
                  Liqiang Nie and
                  Xia Hu and
                  Tat{-}Seng Chua},
  title        = {Neural Collaborative Filtering},
  booktitle    = {WWW},
  year         = {2017}
}

@inproceedings{Rendle10,
  author       = {Steffen Rendle},
  editor       = {Geoffrey I. Webb and
                  Bing Liu and
                  Chengqi Zhang and
                  Dimitrios Gunopulos and
                  Xindong Wu},
  title        = {Factorization Machines},
  booktitle    = {ICDM},
  year         = {2010}
}

@inproceedings{RendleFGS09,
  author       = {Steffen Rendle and
                  Christoph Freudenthaler and
                  Zeno Gantner and
                  Lars Schmidt{-}Thieme},
  title        = {{BPR:} Bayesian Personalized Ranking from Implicit Feedback},
  booktitle    = {UAI},
  year         = {2009}
}

@article{abs-2407-13699,
  author       = {Shaina Raza and
                  Mizanur Rahman and
                  Safiullah Kamawal and
                  Armin Toroghi and
                  Ananya Raval and
                  Farshad Navah and
                  Amirmohammad Kazemeini},
  title        = {A Comprehensive Review of Recommender Systems: Transitioning from
                  Theory to Practice},
  journal      = {CoRR},
  year         = {2024},
  volume       = {abs/2407.13699}
}

@inproceedings{LiHW18,
  author       = {Qimai Li and
                  Zhichao Han and
                  Xiao{-}Ming Wu},
  title        = {Deeper Insights Into Graph Convolutional Networks for Semi-Supervised
                  Learning},
  booktitle    = {AAAI},
  year         = {2018},
}

@inproceedings{NingK11,
  author       = {Xia Ning and
                  George Karypis},
  title        = {{SLIM:} Sparse Linear Methods for Top-N Recommender Systems},
  booktitle    = {ICDM},
  year         = {2011}
}

@article{abs-2407-21783,
  author = {Abhinav Dubey and others},
  title        = {The Llama 3 Herd of Models},
  journal      = {CoRR},
  year         = {2024},
  volume       = {abs/2407.21783}
}

@article{KorenBV09,
  author       = {Yehuda Koren and
                  Robert M. Bell and
                  Chris Volinsky},
  title        = {Matrix Factorization Techniques for Recommender Systems},
  journal      = {Computer},
  year         = {2009},
  volume       = {42},
  number       = {8},
  pages        = {30--37}
}

@article{abs-2201-10005,
  author       = {Arvind Neelakantan and others},
  title        = {Text and Code Embeddings by Contrastive Pre-Training},
  journal      = {CoRR},
  volume       = {abs/2201.10005},
  year         = {2022},
  volume       = {abs/2201.10005}
}

@inproceedings{Geng0FGZ22,
  author       = {Shijie Geng and
                  Shuchang Liu and
                  Zuohui Fu and
                  Yingqiang Ge and
                  Yongfeng Zhang},
  title        = {Recommendation as Language Processing {(RLP):} {A} Unified Pretrain,
                  Personalized Prompt {\&} Predict Paradigm {(P5)}},
  booktitle    = {RecSys},
  year         = {2022}
}

@inproceedings{BaoZZWF023,
  author       = {Keqin Bao and
                  Jizhi Zhang and
                  Yang Zhang and
                  Wenjie Wang and
                  Fuli Feng and
                  Xiangnan He},
  title        = {TALLRec: An Effective and Efficient Tuning Framework to Align Large
                  Language Model with Recommendation},
  booktitle    = {RecSys},
  year         = {2023}
}

@inproceedings{Wang0WFC19,
  author       = {Xiang Wang and
                  Xiangnan He and
                  Meng Wang and
                  Fuli Feng and
                  Tat{-}Seng Chua},
  title        = {Neural Graph Collaborative Filtering},
  booktitle    = {SIGIR},
  year         = {2019}
}

@inproceedings{YuY00CN22,
  author       = {Junliang Yu and
                  Hongzhi Yin and
                  Xin Xia and
                  Tong Chen and
                  Lizhen Cui and
                  Quoc Viet Hung Nguyen},
  title        = {Are Graph Augmentations Necessary?: Simple Graph Contrastive Learning
                  for Recommendation},
  booktitle    = {SIGIR},
  year         = {2022}
}

@inproceedings{JuanZCL16,
  author       = {Yu{-}Chin Juan and
                  Yong Zhuang and
                  Wei{-}Sheng Chin and
                  Chih{-}Jen Lin},
  title        = {Field-aware Factorization Machines for {CTR} Prediction},
  booktitle    = {RecSys},
  year         = {2016}
}

@inproceedings{Steck19,
  author       = {Harald Steck},
  title        = {Embarrassingly Shallow Autoencoders for Sparse Data},
  booktitle    = {WWW},
  year         = {2019}
}

@article{abs-2312-10073,
  author       = {Zefeng Chen and
                  Wensheng Gan and
                  Jiayang Wu and
                  Kaixia Hu and
                  Hong Lin},
  title        = {Data Scarcity in Recommendation Systems: {A} Survey},
  journal      = {CoRR},
  year         = {2023},
  volume       = {abs/2312.10073}
}

@inproceedings{Gao0HCZFZ22,
  author       = {Yunjun Gao and
                  Yuntao Du and
                  Yujia Hu and
                  Lu Chen and
                  Xinjun Zhu and
                  Ziquan Fang and
                  Baihua Zheng},
  title        = {Self-Guided Learning to Denoise for Robust Recommendation},
  booktitle    = {SIGIR},
  year         = {2022}
}

@inproceedings{HuangDDYFW021,
  author       = {Tinglin Huang and
                  Yuxiao Dong and
                  Ming Ding and
                  Zhen Yang and
                  Wenzheng Feng and
                  Xinyu Wang and
                  Jie Tang},
  title        = {MixGCF: An Improved Training Method for Graph Neural Network-based
                  Recommender Systems},
  booktitle    = {KDD},
  year         = {2021}
}

@article{abs-2303-10993,
  author       = {T. Konstantin Rusch and
                  Michael M. Bronstein and
                  Siddhartha Mishra},
  title        = {A Survey on Oversmoothing in Graph Neural Networks},
  journal      = {CoRR},
  year         = {2023},
  volume       = {abs/2303.10993}
}

@inproceedings{ChenLLLZS20,
  author       = {Deli Chen and
                  Yankai Lin and
                  Wei Li and
                  Peng Li and
                  Jie Zhou and
                  Xu Sun},
  title        = {Measuring and Relieving the Over-Smoothing Problem for Graph Neural Networks from the Topological View},
  booktitle    = {AAAI},
  year         = {2020}
}

@article{abs-2310-06825,
  author       = {Albert Q. Jiang and
                  Alexandre Sablayrolles and
                  Arthur Mensch and
                  Chris Bamford and
                  Devendra Singh Chaplot and
                  Diego de Las Casas and
                  Florian Bressand and
                  Gianna Lengyel and
                  Guillaume Lample and
                  Lucile Saulnier and
                  L{\'{e}}lio Renard Lavaud and
                  Marie{-}Anne Lachaux and
                  Pierre Stock and
                  Teven Le Scao and
                  Thibaut Lavril and
                  Thomas Wang and
                  Timoth{\'{e}}e Lacroix and
                  William El Sayed},
  title        = {Mistral 7B},
  journal      = {CoRR},
  year         = {2023},
  volume       = {abs/2310.06825}
}

@inproceedings{KangM18,
  author       = {Wang{-}Cheng Kang and
                  Julian J. McAuley},
  title        = {Self-Attentive Sequential Recommendation},
  booktitle    = {ICDM},
  year         = {2018}
}

@inproceedings{PathakGM17,
  author       = {Apurva Pathak and
                  Kshitiz Gupta and
                  Julian J. McAuley},
  title        = {SIGIR},
  year         = {2017}
}

@article{abs-2403-03952,
  author       = {Yupeng Hou and
                  Jiacheng Li and
                  Zhankui He and
                  An Yan and
                  Xiusi Chen and
                  Julian J. McAuley},
  title        = {Bridging Language and Items for Retrieval and Recommendation},
  journal      = {CoRR},
  year         = {2024},
  volume       = {abs/2403.03952}
}

@inproceedings{0003CSWC24,
  author       = {An Zhang and
                  Yuxin Chen and
                  Leheng Sheng and
                  Xiang Wang and
                  Tat{-}Seng Chua},
  title        = {On Generative Agents in Recommendation},
  booktitle    = {SIGIR},
  year         = {2024}
}

@inproceedings{NiLM19,
  author       = {Jianmo Ni and
                  Jiacheng Li and
                  Julian J. McAuley},
  title        = {Justifying Recommendations using Distantly-Labeled Reviews and Fine-Grained
                  Aspects},
  booktitle    = {EMNLP},
  year         = {2019}
}

@inproceedings{YouCSCWS20,
  author       = {Yuning You and
                  Tianlong Chen and
                  Yongduo Sui and
                  Ting Chen and
                  Zhangyang Wang and
                  Yang Shen},
  title        = {Graph Contrastive Learning with Augmentations},
  booktitle    = {NeurIPS},
  year         = {2020}
}

@article{abs-1809-10341,
  author       = {Petar Velickovic and
                  William Fedus and
                  William L. Hamilton and
                  Pietro Li{\`{o}} and
                  Yoshua Bengio and
                  R. Devon Hjelm},
  title        = {Deep Graph Infomax},
  journal      = {CoRR},
  year         = {2018},
  volume       = {abs/1809.10341}
}

@article{abs-2006-04131,
  author       = {Yanqiao Zhu and
                  Yichen Xu and
                  Feng Yu and
                  Qiang Liu and
                  Shu Wu and
                  Liang Wang},
  title        = {Deep Graph Contrastive Representation Learning},
  journal      = {CoRR},
  year         = {2020},
  volume       = {abs/2006.04131}
}

@inproceedings{SarwarKKR01,
  author       = {Badrul Munir Sarwar and
                  George Karypis and
                  Joseph A. Konstan and
                  John Riedl},
  title        = {Item-based collaborative filtering recommendation algorithms},
  booktitle    = {WWW},
  year         = {2001}
}

@article{LindenSY03,
  author       = {Greg Linden and
                  Brent Smith and
                  Jeremy York},
  title        = {Amazon.com Recommendations: Item-to-Item Collaborative Filtering},
  journal      = {{IEEE} Internet Comput.},
  year         = {2003},
  volume       = {7},
  number       = {1},
  pages        = {76--80}
}

@inproceedings{Karypis01,
  author       = {George Karypis},
  title        = {Evaluation of Item-Based Top-N Recommendation Algorithms},
  booktitle    = {CIKM},
  year         = {2001}
}

@inproceedings{VeitKBMBB15,
  author       = {Andreas Veit and
                  Balazs Kovacs and
                  Sean Bell and
                  Julian J. McAuley and
                  Kavita Bala and
                  Serge J. Belongie},
  title        = {Learning Visual Clothing Style with Heterogeneous Dyadic Co-Occurrences},
  booktitle    = {ICCV},
  year         = {2015}
}

@inproceedings{LiuZZ023,
  author       = {Yaokun Liu and
                  Xiaowang Zhang and
                  Minghui Zou and
                  Zhiyong Feng},
  title        = {Co-occurrence Embedding Enhancement for Long-tail Problem in Multi-Interest
                  Recommendation},
  booktitle    = {RecSys},
  year         = {2023}
}

@inproceedings{ZhouZY23,
  author       = {Zhihui Zhou and
                  Lilin Zhang and
                  Ning Yang},
  title        = {Contrastive Collaborative Filtering for Cold-Start Item Recommendation},
  booktitle    = {WWW},
  year         = {2023}
}

@article{abs-2103-00020,
  author       = {Alec Radford and
                  Jong Wook Kim and
                  Chris Hallacy and
                  Aditya Ramesh and
                  Gabriel Goh and
                  Sandhini Agarwal and
                  Girish Sastry and
                  Amanda Askell and
                  Pamela Mishkin and
                  Jack Clark and
                  Gretchen Krueger and
                  Ilya Sutskever},
  title        = {Learning Transferable Visual Models From Natural Language Supervision},
  journal      = {CoRR},
  year         = {2021},
  volume       = {abs/2103.00020}
}

@inproceedings{KarpukhinOMLWEC20,
  author       = {Vladimir Karpukhin and
                  Barlas Oguz and
                  Sewon Min and
                  Patrick Lewis and
                  Ledell Wu and
                  Sergey Edunov and
                  Danqi Chen and
                  Wen{-}tau Yih},
  title        = {Dense Passage Retrieval for Open-Domain Question Answering},
  booktitle    = {EMNLP},
  year         = {2020}
}

@inproceedings{0046ZL21,
  author       = {Liang Wang and
                  Wei Zhao and
                  Jingming Liu},
  title        = {Aligning Cross-lingual Sentence Representations with Dual Momentum
                  Contrast},
  booktitle    = {EMNLP},
  year         = {2021}
}

@inproceedings{ChenTZ0SZWC24,
  author       = {Yuxin Chen and
                  Junfei Tan and
                  An Zhang and
                  Zhengyi Yang and
                  Leheng Sheng and
                  Enzhi Zhang and
                  Xiang Wang and
                  Tat{-}Seng Chua},
  title        = {On Softmax Direct Preference Optimization for Recommendation},
  booktitle    = {NeurIPS},
  year         = {2024}
}

@article{abs-2312-11805,
  author       = {Gemini Team},
  title        = {Gemini: {A} Family of Highly Capable Multimodal Models},
  journal      = {CoRR},
  year         = {2023},
  volume       = {abs/2312.11805}
}

@article{abs-2506-05176,
  author       = {Yanzhao Zhang and
                  Mingxin Li and
                  Dingkun Long and
                  Xin Zhang and
                  Huan Lin and
                  Baosong Yang and
                  Pengjun Xie and
                  An Yang and
                  Dayiheng Liu and
                  Junyang Lin and
                  Fei Huang and
                  Jingren Zhou},
  title        = {Qwen3 Embedding: Advancing Text Embedding and Reranking Through Foundation
                  Models},
  journal      = {CoRR},
  year         = {2025},
  volume       = {abs/2506.05176}
}

@article{MalkovY20,
  author       = {Yury A. Malkov and
                  Dmitry A. Yashunin},
  title        = {Efficient and Robust Approximate Nearest Neighbor Search Using Hierarchical
                  Navigable Small World Graphs},
  journal      = {{IEEE} Trans. Pattern Anal. Mach. Intell.},
  year         = {2020},
  volume       = {42},
  number       = {4},
  pages        = {824--836}
}

@inproceedings{ParkJK17,
  author       = {Haekyu Park and
                  Jinhong Jung and
                  U Kang},
  title        = {A comparative study of matrix factorization and random walk with restart
                  in recommender systems},
  booktitle    = {BigData},
  year         = {2017}
}

@inproceedings{LeeKSJ24,
  author       = {Seunghan Lee and
                  Geonwoo Ko and
                  Hyun{-}Je Song and
                  Jinhong Jung},
  title        = {MuLe: Multi-Grained Graph Learning for Multi-Behavior Recommendation},
  booktitle    = {CIKM},
  year         = {2024}
}

@inproceedings{LeeSKLL11,
  author       = {Sangkeun Lee and
                  Sang{-}il Song and
                  Minsuk Kahng and
                  Dongjoo Lee and
                  Sang{-}goo Lee},
  title        = {Random walk based entity ranking on graph for multidimensional recommendation},
  booktitle    = {RecSys},
  year         = {2011}
}

@inproceedings{JungPSK17,
  author       = {Jinhong Jung and
                  Namyong Park and
                  Lee Sael and
                  U Kang},
  title        = {BePI: Fast and Memory-Efficient Method for Billion-Scale Random Walk
                  with Restart},
  booktitle    = {SIGMOD},
  year         = {2017}
}

@inproceedings{JungJSK16,
  author       = {Jinhong Jung and
                  Woojeong Jin and
                  Lee Sael and
                  U Kang},
  title        = {Personalized Ranking in Signed Networks Using Signed Random Walk with
                  Restart},
  booktitle    = {ICDM},
  year         = {2016}
}

@inproceedings{KimKLJS25,
  author       = {Kyungho Kim and
                  Sunwoo Kim and
                  Geon Lee and
                  Jinhong Jung and
                  Kijung Shin},
  title        = {Multi-behavior Recommender Systems: {A} Survey},
  booktitle    = {PAKDD},
  year         = {2025}
}

@inproceedings{KimKLS25,
  author       = {Kyungho Kim and
                  Sunwoo Kim and
                  Geon Lee and
                  Kijung Shin},
  title        = {A Self-Supervised Mixture-of-Experts Framework for Multi-behavior
                  Recommendation},
  booktitle    = {CIKM},
  year         = {2025}
}

@inproceedings{LeeJ23,
  author       = {Jong{-}whi Lee and
                  Jinhong Jung},
  title        = {Time-Aware Random Walk Diffusion to Improve Dynamic Graph Learning},
  booktitle    = {AAAI},
  year         = {2023}
}

@article{KoJ24,
  author       = {Geonwoo Ko and
                  Jinhong Jung},
  title        = {Learning disentangled representations in signed directed graphs without
                  social assumptions},
  journal      = {Inf. Sci.},
  year         = {2024},
  volume       = {665},
  pages        = {120373}
}

@article{HeGKW17,
  author       = {Xiangnan He and
                  Ming Gao and
                  Min{-}Yen Kan and
                  Dingxian Wang},
  title        = {BiRank: Towards Ranking on Bipartite Graphs},
  journal      = {{IEEE} Trans. Knowl. Data Eng.},
  year         = {2017},
  volume       = {29},
  number       = {1},
  pages        = {57--71}
}

@article{ko2025personalized,
  title={Personalized Ranking on Cascading Behavior Graphs for Accurate Multi-Behavior Recommendation},
  author={Ko, Geonwoo and Jeon, Minseo and Jung, Jinhong},
  journal={arXiv preprint arXiv:2502.11335},
  year={2025},
  volume       = {abs/2502.11335}
}

@article{abs-2012-14191,
  author       = {Jinhong Jung and
                  Jaemin Yoo and
                  U Kang},
  title        = {Signed Graph Diffusion Network},
  journal      = {CoRR},
  year         = {2020},
  volume       = {abs/2012.14191}
}

@article{GuJSJ25,
  author       = {Gyeongmin Gu and
                  Minseo Jeon and
                  Hyun{-}Je Song and
                  Jinhong Jung},
  title        = {Effective and lightweight representation learning for signed bipartite
                  graphs},
  journal      = {Neural Networks},
  year         = {2025},
  volume       = {192},
  pages        = {107708}
}

@article{ChunLSJ24,
  author       = {Jaewan Chun and
                  Geon Lee and
                  Kijung Shin and
                  Jinhong Jung},
  title        = {Random walk with restart on hypergraphs: fast computation and an application
                  to anomaly detection},
  journal      = {Data Min. Knowl. Discov.},
  volume       = {38},
  number       = {3},
  pages        = {1222--1257},
  year         = {2024}
}

@inproceedings{LeeK25,
  author       = {Jaeri Lee and
                  U Kang},
  title        = {Context-aware Sequential Bundle Recommendation via User-specific Representations},
  booktitle    = {CIKM},
  year         = {2025},
}

@inproceedings{LeeKS24,
  author       = {Geon Lee and
                  Kyungho Kim and
                  Kijung Shin},
  title        = {Revisiting LightGCN: Unexpected Inflexibility, Inconsistency, and
                  {A} Remedy Towards Improved Recommendation},
  booktitle    = {RecSys},
  year         = {2024}
}

@inproceedings{000100L25,
  author       = {Minjin Choi and
                  Sunkyung Lee and
                  Seongmin Park and
                  Jongwuk Lee},
  title        = {Linear Item-Item Models with Neural Knowledge for Session-based Recommendation},
  booktitle    = {SIGIR},
  year         = {2025},
}

@inproceedings{KimP23,
  author       = {Sang{-}Hong Kim and
                  Ha{-}Myung Park},
  title        = {Efficient Distributed Approximate k-Nearest Neighbor Graph Construction
                  by Multiway Random Division Forest},
  booktitle    = {SIGKDD},
  year         = {2023},
}

@article{JohnsonDJ21,
  author       = {Jeff Johnson and
                  Matthijs Douze and
                  Herv{\'{e}} J{\'{e}}gou},
  title        = {Billion-Scale Similarity Search with GPUs},
  journal      = {{IEEE} Trans. Big Data},
  volume       = {7},
  number       = {3},
  pages        = {535--547},
  year         = {2021}
}

@article{WangZZWMLWLXTHMHW25,
  author       = {Fali Wang and
                  Zhiwei Zhang and
                  Xianren Zhang and
                  Zongyu Wu and
                  Tzuhao Mo and
                  Qiuhao Lu and
                  Wanjing Wang and
                  Rui Li and
                  Junjie Xu and
                  Xianfeng Tang and
                  Qi He and
                  Yao Ma and
                  Ming Huang and
                  Suhang Wang},
  title        = {A Comprehensive Survey of Small Language Models in the Era of Large
                  Language Models: Techniques, Enhancements, Applications, Collaboration
                  with LLMs, and Trustworthiness},
  journal      = {{ACM} Trans. Intell. Syst. Technol.},
  volume       = {16},
  number       = {6},
  pages        = {145:1--145:87},
  year         = {2025}
}

@inproceedings{Kjju25,
  author       = {Ka Hyun Park and Junghun Kim and Jinhong Jung and U Kang
},
  title        = {PiGLeT: Probabilistic Message Passing for Semi-supervised Link Sign Prediction
},
  booktitle    = {ICDM},
  year         = {2025},
}

@inproceedings{ParkBKKKKKL25,
  author       = {Seungcheol Park and
                  Jeongin Bae and
                  Beomseok Kwon and
                  Minjun Kim and
                  Byeongwook Kim and
                  Se Jung Kwon and
                  U Kang and
                  Dongsoo Lee},
  title        = {Unifying Uniform and Binary-coding Quantization for Accurate Compression
                  of Large Language Models},
  booktitle    = {ACL},
  year         = {2025}
}

\appendix
\vspace{-3mm}
\section{Limitations and Future Directions}
\label{appendix:limitations}

\smallsection{Dependency on textual metadata and LMs}
\method exhibits an inherent dependency on textual item metadata and pre-trained LMs.
This dependency arises as an unavoidable trade-off of its ID-free design, which replaces raw item IDs with LRs to address the limitations associated with ID-based modeling.
As a result, in scenarios where no textual metadata is available, \method, along with other LR-based methods, cannot be directly applied.
Moreover, the quality of the resulting representations is inherently bounded by the expressive power of the chosen LM (see Appendix~\ref{appendix:Effect_of_LM}).
Consequently, the performance of \method may vary across domains depending on both the quality of textual metadata, as well as the specific LM employed.
However, this work, including~\cite{Sheng0ZCWC25}, uses only item titles whose LRs provide beneficial signals for recommendation. 
As such titles are widely available in most domains, \method is expected to be practically applicable.
Nevertheless, how to effectively leverage other textual metadata (e.g., category, price, description, etc.) beyond item titles, as well as the trade-offs induced by different LM choices, has not yet been systematically studied, making this an important open question for future research.

\smallsection{Preprocessing overhead and scalability}
Our method exhibits additional preprocessing overhead and scalability considerations.
In the data augmentation process (Section~\ref{sec:method:preprocessing}), \method computes pairwise similarities among items, which currently incurs $O(n^2)$ time for $n$ items.
Moreover, as shown in Table~\ref{tab:runtime}, generating LRs of items using pre-trained LMs accounts for about 98--99\% of the preprocessing time, leading to higher preprocessing costs than ID-dependent methods.
These overheads primarily affect the initial construction stage and may become a bottleneck for extremely large-scale datasets.
Nevertheless, note that this cost is incurred only once and does not impact inference efficiency, as only the original-view encoder is used during inference.
For better scalability, the quadratic complexity of similarity computation can be alleviated by adopting approximate $k$-NN search methods (e.g., HNSW~\cite{MalkovY20}, IVF~\cite{JohnsonDJ21}, and MRDF~\cite{KimP23}). 
In addition, employing lightweight LMs~\cite{WangZZWMLWLXTHMHW25,ParkBKKKKKL25} can further reduce preprocessing time, while systematically analyzing the associated trade-offs in representation quality remains an important direction for future work.

\smallsection{Future directions}
This work focuses on top-$K$ item recommendation with static user–item interaction histories. A natural extension is to apply the proposed design principles  to more expressive settings, including sequential recommendation~\cite{KangM18}, session-based recommendation~\cite{000100L25}, multi-behavior modeling~\cite{LeeKSJ24,KimKLJS25,KimKLS25}, bundle recommendation~\cite{LeeK25}, 
temporal dynamics~\cite{LeeJ23}, and signed or explicit feedback~\cite{abs-2012-14191,KoJ24,GuJSJ25,Kjju25}.
Moreover, while behavior similarity is computed based on co-occurrence in this study, future work may explore alternative graph-based similarities, such as random-walk-based~\cite{JungPSK17,ChunLSJ24} or bi-ranking methods~\cite{HeGKW17,ko2025personalized}, to further examine their effectiveness or trade-offs within the proposed framework.

\section{Detailed Experimental Settings}
\label{appendix:settings}

\smallsection{Machine} We used a workstation equipped with an AMD 5955WX, an RTX 4090 (24GB VRAM), and 128GB of main memory.

\smallsection{Implementation} We implemented our \method in Python 3.9 using PyTorch 1.13 with CUDA 11.7.
For the baselines, we used their official open-source implementations from the following sources:
\begin{itemize}[leftmargin=4mm]
    \item{\textbf{MF-BPR}: \url{https://github.com/RUCAIBox/RecBole/blob/master/recbole/model/general_recommender/bpr.py}}
    \item{\textbf{FISM}\footnote{We adapted FISM to employ BPR loss instead of BCE loss, based on the original code available at the corresponding repository.}: \url{https://github.com/RUCAIBox/RecBole/blob/master/recbole/model/general_recommender/fism.py}}
    \item{\textbf{LightGCN}: \url{https://github.com/gusye1234/LightGCN-PyTorch}}
    \item{\textbf{XSimGCL}: \url{https://github.com/Coder-Yu/SELFRec}}
    \item{\textbf{RLMRec}: \url{https://github.com/HKUDS/RLMRec}}
    \item{\textbf{AlphaRec}: \url{https://github.com/LehengTHU/AlphaRec}}
\end{itemize}

\smallsection{Details on datasets}
Following previous studies~\cite{NiLM19, abs-2403-03952, Sheng0ZCWC25}, we used datasets from Amazon Reviews~\cite{abs-2403-03952}\footnote{\url{https://amazon-reviews-2023.github.io/data_processing/5core.html}} across several categories, including \movie, \book, \video, \baby, \health, and \beauty. 
For a fair comparison with AlphaRec~\cite{Sheng0ZCWC25}, we used the preprocessed \movie and \book versions provided in that work\footnote{\url{https://github.com/LehengTHU/AlphaRec}}. 
For the other datasets, we used the 5-core versions from~\cite{abs-2403-03952}.
We also used the 5-core version of the \steam reviews~\cite{PathakGM17}\footnote{\url{https://cseweb.ucsd.edu/~jmcauley/datasets.html\#steam_data}}.
We used item titles as text. 

\smallsection{Language representations}
\label{appendix:lr_explain}
For \movie and \book, we employed the preprocessed LRs provided by~\cite{Sheng0ZCWC25}, based on text-embedding-3-large with dimension $\dlm=\text{3,072}$, which were used in both AlphaRec and AlphaFree.
For the other datasets, where no LRs are preprocessed, we obtained them using LLaMA-3.1-8B~\cite{abs-2407-21783} ($\dlm=\text{4,096}$).
RLMRec used GPT 3.5 Turbo and ada-002 as suggested in the original work ($\dlm=\text{1,536}$).

\smallsection{Hyperparameter tuning}
\label{appendix:hyperparam_scope}
We used Adam optimizer for training.
For baselines, we used the hyperparameters reported in their respective papers. 
For our method, we fixed embedding dimension $d$ to 64, learning rate $\eta$ to 0.0005, batch size $|\batch|$ to 4096, weight decay to $10^{-6}$, and the number $n_s$ of negative samples to 256. 
We trained for up to 500 epochs ($T_{\text{epoch}}=500$) with early stopping based on the validation metric (patience = 20).
We performed a grid search to select optimal values for the key hyperparameters: 
the number of similar items $K_c \in \{1, 3, 5, 10\}$, 
the alignment temperature $\tau_a \in \{0.01, 0.05, 0.15, 0.2\}$, the recommendation temperature $\tau_r \in \{0.15, 0.2\}$, and the alignment loss weight $\lambda_\text{align} \in \{0.01, 0.05, 0.15, 0.2\}$.

\renewcommand{\arraystretch}{0.9} 
\begin{table}[t]
\vspace{-3mm}
\small
\caption{Frequently-used symbols}
\label{tab:symbols}
\begin{tabular}{c|l}
\toprule
\textbf{Symbol} & \textbf{Description} \\
\midrule
$\items$   & set of items, where $n=|\items|$ \\
$\interactions$ \& $\interactions^{+}$ & original and augmented interaction sets, resp.  \\
$\traincol$   & set of all interaction sets, where $\interactions \in \traincol$ \\
$\batch$   & mini-batch, i.e., $\batch \subseteq \traincol$ \\
$K_c$ & number of similar items \\
$p$ & positive item randomly sampled from $I$\\
$N_I$ &  set of negatives drawn from $\items\setminus\interactions$ \\
$\mathcal{S}_{i}$ & set of similar items to item $i$ \\
$\lr{i}$ \& $\lr{i}^{+} \in \mathbb{R}^{\dlm}$ & original and augmented LRs of item $i$ \\
$\emb{*}$ \& $\emb{*}^+ \in \mathbb{R}^{\times d}$ & original and augmented embeddings of $*\in\{i, \interactions\}$\\
$\simf{B}(\cdot,\cdot)$ & behavioral similarity between two items \\
$\simf{S}(\cdot,\cdot)$ & semantic similarity between two items \\
$\dlm$ \& $d$  & dimension of LRs and embeddings \\
$\lambda_{\textnormal{align}}$  & ratio of the alignment loss \\
\bottomrule
\end{tabular}
\end{table}

\smallsection{Statistics on user cold-start setting}
As described in Section~\ref{sec:exp:generalization_unseen_interaction}, the split produces a user cold-start setting, where the average number of items per interaction set in the validation and test data ranges from 1.03 to 2.95, as shown in the table below.

\renewcommand{\arraystretch}{0.8} 
\begin{table}[h]
\setlength{\tabcolsep}{13pt}
\centering
\caption{Statistics on the number of items per interaction set in user cold-start setting.}
\begin{tabular}{c|ccc}
\hline
\toprule
\textbf{Datasets} & \textbf{Average} & \textbf{Maximum} & \textbf{Minimum} \\
\midrule
\movie & 2.95 & 142 & 1 \\
\book  & 2.70 & 74  & 1 \\
\video & 1.02 & 18  & 1 \\
\baby  & 1.03 & 19  & 1 \\
\bottomrule
\hline
\end{tabular}
\label{tab:item_stats}
\end{table}

\section{Detailed Algorithms}
\label{appendix:algorithms}
We provide detailed algorithms for each phase of \method in Algorithms~\ref{alg:preprocessing}–\ref{alg:inference}.
The preprocessing phase (Algorithm~\ref{alg:preprocessing}) is expressed in an item-wise manner rather than directly following the equations.
The similar-item search can be parallelized by performing a parallel loop over items (line~\ref{alg:preprocessing:i}), provided that the resulting item-level outputs are aggregated via an interaction-wise union operation (lines \ref{alg:preprocessing:i_aug}–\ref{alg:preprocessing:i_aug_union}).
Note that computing $\simf{B}(i,j)$ is equivalent to counting the number of users who have co-consumed both items $i$ and $j$.
Thus, it can be easily implemented in a vectorized manner as $\simf{B}(i,j) = \mat{B}_{ij}$, where $\mat{B} = \matt{A}\mat{A}$ and $\mat{A}$ denotes the user–item interaction matrix constructed from $\traincol$. 
In our implementation, we compute $\simf{B}$ directly in this matrix form, representing $\mat{A}$ as a sparse matrix (e.g., CSR).
In the training phase, although we describe the computation of embeddings only for the items in $I$, $I^+$, and $N_I$ in Section~\ref{sec:method:training:encoders}, our implementation precomputes embeddings for all items once before each batch to take advantage of parallel computation efficiency (line~\ref{alg:training:emb} of Alg.~\ref{alg:training}).

\renewcommand{\algorithmicrequire}{\textbf{Input:}}
\renewcommand{\algorithmicensure}{\textbf{Output:}}

\begin{algorithm}[t]
    \small
    \caption{Preprocessing phase of \method}
    \label{alg:preprocessing}
    \begin{algorithmic}[1]
        \Require 
            \item[-] $\traincol$: training collection of interaction sets
            \item[-] $\{t_i \mid i \in \items\}$: set of item contents
            \item[-] $K_c$: number of similar items
        \Ensure 
            \item[-]  $\{(I \mapsto I^+) \mid I \in \traincol\}$: collection of inter. sets with augmentations
            \item[-] $\{(\lr{i} \mapsto \lr{i}^{+}) \mid i \in \items\}$: set of item LRs with augmentations
        \vspace{1mm}    

        \State compute $\lr{i} \gets \texttt{LM}(t_i)$ \textbf{for each} $i \in \items$ \label{alg:preprocessing:lr}
        \State compute $H(i) \gets \{I \mid i \in I, I \in \traincol \}$ \textbf{for each} $i \in \items$ \label{alg:preprocessing:h}
        
        \State initialize $I^{+} \gets I$ \textbf{for each} $I \in \traincol$

        \For{\textbf{each} $i \in \items$} \label{alg:preprocessing:i}
            \For{\textbf{each} $j \in \items$} \label{alg:preprocessing:compute_sim}
                \State compute $\simf{B}(i, j) \gets |H(i) \cap H(j)|$ \Comment{{\scriptsize skip if $|H(i)|\cdot|H(j)|=0$}} \label{alg:preprocessing:sim_behavior}
                \State compute $\simf{S}(i, j) \gets \lr{i}^{\top}\lr{j}$ \label{alg:preprocessing:sim_semantic}
            \EndFor
            \State compute $\mu_i$ and $\cand{i}$ with $K_c$ according to Eqs.~\eqref{eq:cand:c}~and~\eqref{eq:sim:bs} \label{alg:preprocessing:cand}
            \For{\textbf{each} $I \in H(i)$} \label{alg:preprocessing:i_aug}
                \State $I^{+} \gets I^{+} \cup \cand{i}$ \label{alg:preprocessing:i_aug_union} 
            \EndFor
            \State compute $\lr{i}^+$ according to Eq.~\eqref{eq:representation_aug} \label{alg:preprocessing:lr_aug}
        \EndFor
        \State map $\lr{i} \mapsto \lr{i}^{+}$ \textbf{for each} $i \in \items$, and $I \mapsto I^{+}$  \textbf{for each} $I \in \traincol$ 

        \State \textbf{return} $\{(I \mapsto I^+) \mid I \in \traincol\}$ and $\{(\lr{i} \mapsto \lr{i}^{+}) \mid i \in \items\}$
    \end{algorithmic}
\end{algorithm}

\begin{algorithm}[t]
\caption{Training phase of \method}
\label{alg:training}
\small
\begin{algorithmic}[1]
    \Require
        \item[-] $\{(I \mapsto I^+) \mid I \in \traincol\}$:  collection of inter. sets with augmentations
        \item[-] $\{(\lr{i} \mapsto \lr{i}^{+}) \mid i \in \items\}$: set of item LRs with augmenations
        \item[-] $\tau_r$ and $\tau_a$: temperature hyperparameters
        \item[-] $\lambda_{\text{align}}$: ratio of the alignment loss 
        \item[-] $T_{\text{epoch}}$ and $\eta$: numbers of epochs and learning rate, resp.
    \Ensure 
        \item[-] $\theta_{\texttt{MLP}}:$ set of parameters of the orignal-view encoder
    \vspace{1mm}

    \State initialize model parameters in $\Theta = \{\theta_{\texttt{MLP}}, \theta_{\texttt{MLP}^{+}}\}$ \label{alg:training:init_param}

    \For{\textbf{each} epoch in $T_{\text{epoch}}$} \Comment{{\scriptsize $I^+ \mapsfrom I$ and $\lr{i}^+ \mapsfrom \lr{i}$}}
        \For{\textbf{each} mini-batch $\batch \subseteq \traincol$}
            \State  compute $\emb{i} \gets \texttt{MLP}(\lr{i})$ and $\emb{i}^{+} \gets \texttt{MLP}^{+}(\lr{i}^{+})$ \textbf{for each} $i \in \items$            \label{alg:training:emb} 
            \For{\textbf{each} $I \in \batch$} 
                \State compute $\emb{I} \gets \texttt{MLP}(
                    \texttt{mean}(\{\lr{i} \mid i \in I\}))$  \label{alg:training:emb_I}
                \State compute $\emb{I}^+ \gets \texttt{MLP}^{+}(
                    \texttt{mean}(\{\lr{i}^{+} \mid i \in I^+\}))$ \label{alg:training:emb_I+}
                \State randomly sample $p$ from $I$, and $N_I$ from $\items \setminus I$ \label{alg:training:sampling}
                \State compute local losses used for $\mathcal{L}_{\text{rec}}$ \Comment{{\scriptsize Eqs.~\eqref{eq:loss:rec:original}-\eqref{eq:loss:rec:augmented}}} \label{alg:training:loss_rec}
            \EndFor
            \State compute local losses used for $\mathcal{L}_{\text{align}}$ \Comment{{\scriptsize Eqs.~\scalebox{0.77}{\eqref{eq:loss:align:inter}}-\scalebox{0.77}{\eqref{eq:loss:align:item}}}} \label{alg:training:loss_align}
            \State compute $\mathcal{L}_{\text{final}}(\batch) \gets \mathcal{L}_{\text{rec}}(\batch) + \lambda_{\text{align}}\cdot\mathcal{L}_{\text{align}}(\batch)$ \Comment{{\scriptsize Eq.~\eqref{eq:loss:final}}}
            \State update $\theta \gets \theta - \eta\nabla_{\theta}\mathcal{L}_\text{final}(\batch)$ \textbf{for each} $\theta \in \Theta$
        \EndFor
    \EndFor
    \State \Return $\theta_{\texttt{MLP}}$
\end{algorithmic}
\end{algorithm}

\begin{algorithm}[t]
    \small
    \caption{Inference phase of \method}
    \label{alg:inference}
    \begin{algorithmic}[1]
        \Require 
            \item[-] $I_q$: querying interaction set
            \item[-] $\{\lr{i} \mid i \in \items\}$: set of item LRs
            \item[-] $\theta_{\texttt{MLP}}$: set of parameters of the original-view encoder
            \item[-] $K$: parameter for top-$K$ recommendation
        \Ensure 
            \item[-] $\mathcal{R}_{\interactions_q}$: top-$K$ ranked list recommended w.r.t. $I_q$
        \vspace{1mm}    

        \State compute $\emb{I_q} \gets \texttt{MLP}(\texttt{mean}(\{\lr{i} \mid i \in I_q\}))$ \label{alg:inference:mlpq}
        \State compute $\emb{i} \gets \texttt{MLP}(\lr{i})$ \textbf{for each} $i \in \items$ \label{alg:inference:mlpi}
        \State $\mathcal{R}_{\interactions_q} \gets \texttt{argtop}_{K}(
            \{ 
                \sigma\bigl(\emb{I_q}, \emb{i}\bigr) \mid i \in \items  
            \}
            )$ \label{alg:inference:topk}
        \Comment{{\scriptsize $y_{\theta_{\texttt{MLP}}}(i \mid \interactions_q) \coloneq \sigma\bigl(\emb{I_q}, \emb{i}\bigr)$}}
        \State \textbf{return} $\mathcal{R}_{\interactions_q}$
    \end{algorithmic}
\end{algorithm}

\section{Details of Semantic Filtering}
\label{appendix:semantic}

We analyzed the effect of semantic thresholds $\kappa$ in Section~\ref{sec:method:preprocessing:similar}, where Eq.~\eqref{eq:sim:bs} can be rewritten as $\cand{i} = \{
    j \in \mathcal{C}_{i} \mid \simf{S}(i, j) \geq \kappa
\}$.
We set the threshold $\kappa$ to correspond to the 25th, 50th, and 75th percentiles of the semantic similarity distribution, and measure retention rate (i.e., $(\sum_{i \in \items}|\cand{i}|)/(\sum_{i \in \items}|\mathcal{C}_{i}|)$), and Recall@20 of \method.
As shown in Table~\ref{tab:filtering-retention-basic}, the best performance is achieved when $\kappa$ corresponds to $\mu_i$, achieving a balanced selection of items based on behavioral similarity.
Note that in both datasets, we used different language models (i.e., Large 3 in \movie and LLAMA-3.1-8B in \baby), suggesting that our filtering strategy is potentially effective across different LMs.

\begin{table}[h]
\centering
\setlength{\tabcolsep}{11.5pt}
\caption{Effects of different semantic thresholds $\kappa$. }
\label{tab:filtering-retention-basic}
\begin{tabular}{cccc}
\hline
\toprule
\textbf{Datasets}  & \textbf{Threshold} $\kappa$ & \textbf{Retention} & \textbf{Recall@20} \\
\midrule
 & 25\% & 67.1\% & 0.1265 \\
\movie & \textbf{mean ($\mu_i$)} & \textbf{80.2\%} & \textbf{0.1267} \\
& 75\% & 91.6\% & 0.1264 \\
\midrule
 & 25\% & 43.9\% & 0.0414 \\
\baby & \textbf{mean ($\mu_i$)} & \textbf{63.9\%} & \textbf{0.0418} \\
& 75\% & 84.5\% & 0.0413 \\
\bottomrule
\hline
\end{tabular}
\end{table}


\section{Effect of Language Models}
\label{appendix:Effect_of_LM}
We examined the recommendation performance of \method across different LMs.
Depending on the number of LM parameters, lightweight LMs (e.g., Qwen-0.6B and Gemma-1B) produce embeddings with smaller $\dlm$, while medium-sized LMs (e.g., LLaMA-8B and Mist-7B) use larger $\dlm$, as shown in Table~\ref{tab:different_lms}. 
We compared \method and AlphaRec across different LMs, and further compared them with GNN-based methods to examine the role of explicit structural modeling.
From Table~\ref{tab:different_lms}, we observe the following:
\begin{itemize}[leftmargin=4mm]
    \item{
        With sufficiently expressive LMs such as LLaMA-8B and Mist-7B, rich LRs allow \method to effectively capture higher-order CF signals without explicit GNNs, enabling \method to achieve strong performance. 
     }
     \item {
        Under smaller LMs such as Qwen-0.6B and Gemma-1B, where expressiveness is rather limited, incorporating explicit structural biases via GNNs (e.g., AlphaRec or XSimGCL) remains beneficial.
     }
     \item {
        However, such GNN-based designs introduce substantial memory overhead, which becomes increasingly problematic as $\dlm$ grows, as shown in  Figure~\ref{fig:exper:vram_usage_LR_compare}.
     }
     \item {
        By eliminating GNNs, \method achieves superior memory efficiency while fully leveraging highly expressive LRs, resulting in state-of-the-art recommendation performance.
     }
\end{itemize}

\section{Validated Hyperparameters} 
We report the validated hyperparameters of \method for each dataset in Table~\ref{tab:hyperparams}.
\begin{table}[h]
\small
\setlength{\tabcolsep}{3pt}
\centering
\caption{Validated hyperparameters of \method. }
\begin{tabular}{c| c c c c c c c}
\hline
\toprule
\bf{Hyperparam.} & \movie & \book & \video & \baby & \steam & \beauty & \health \\
\midrule
$K_c$&5&5&10&10&3&10&5\\
$\lambda_\text{align}$&0.2&0.2&0.05&0.01&0.01&0.01&0.01\\
$\tau_a$&0.2&0.1&0.01&0.2&0.2&0.1&0.2\\
$\tau_r $&0.15&0.15&0.2&0.2&0.2&0.2&0.15\\
$\dlm$ & 3,072 & 3,072 & 4,096 & 4,096 & 4,096 & 4,096 & 4,096 \\
\bottomrule
\hline
\end{tabular}
\label{tab:hyperparams}
\end{table}

\begin{table}[t]
\scriptsize
\setlength{\tabcolsep}{1.5pt}
\renewcommand{\arraystretch}{1}
\centering

\definecolor{lightgray}{gray}{0.94}
\caption{Recommendation performance of \method across different LMs in terms of R@20 (Recall@20) and N@20 (NDCG@20), respectively.
For each LM, bold and underlined values denote the best and second-best scores, respectively.
The best R@20 and N@20 across all tested LMs are in red and blue, respectively.
Note that LightGCN and XSimGCL do not rely on LMs; therefore, their performance is identical across different LMs and is reported for comparison.
\label{tab:different_lms}
}
\label{tab:lm_all_vertical}
\resizebox{\linewidth}{!}{
\begin{tabular}{
c|c
>{\columncolor{lightgray}}c
>{\columncolor{lightgray}}c
cc
>{\columncolor{lightgray}}c
>{\columncolor{lightgray}}c
cc}
\hline
\toprule
\multirow{2}{*}[-0.45ex]{} & \multirow{2}{*}[-0.55ex]{\shortstack{\textbf{Methods}\\{\color{white}!}}}
& \multicolumn{2}{>{\columncolor{lightgray}}c}{\shortstack{\textbf{Qwen-0.6B}~\cite{abs-2506-05176} \\ ($\dlm$=1,024)}} 
& \multicolumn{2}{c}{\shortstack{\textbf{Gemma-1B}~\cite{abs-2312-11805} \\ ($\dlm$=1,152)}}
& \multicolumn{2}{>{\columncolor{lightgray}}c}{\shortstack{\textbf{LLaMA-8B}~\cite{abs-2407-21783} \\ ($\dlm$=4,096)}} 
& \multicolumn{2}{c}{\shortstack{\textbf{Mist-7B}~\cite{abs-2310-06825} \\ ($\dlm$= 4,096)}} \\
& & \textbf{R@20} & \textbf{N@20} & \textbf{R@20} & \textbf{N@20} & \textbf{R@20} & \textbf{N@20} & \textbf{R@20} & \textbf{N@20} \\
\midrule

\multirow{4}{*}{\rotatebox[origin=c]{90}{\video}}
& \textbf{\method} & \underline{0.103} & \underline{0.057} & \underline{0.094} & \underline{0.052} & \textbf{0.112} & \textbf{0.062} & \cellcolor{red!10}\textbf{0.118} & \cellcolor{blue!10}\textbf{0.065} \\
& \textbf{AlphaRec}  & \textbf{0.107} & \textbf{0.059} & \textbf{0.099} & \textbf{0.056} & \underline{0.109} & \underline{0.061} & \underline{0.115} & \underline{0.064} \\
& \textbf{LightGCN}  & 0.073 & 0.041 & 0.073 & 0.041 & 0.073 & 0.041 & 0.073 & 0.041 \\
& \textbf{XSimGCL}   & 0.090 & 0.050 & 0.090 & 0.050 & 0.090 & 0.050 & 0.090 & 0.050 \\
\midrule

\multirow{4}{*}{\rotatebox[origin=c]{90}{\baby}}
& \textbf{\method} & 0.034 & 0.018 & 0.034 & 0.018 & \textbf{0.041} & \textbf{0.022} & \cellcolor{red!10}\textbf{0.045} & \cellcolor{blue!10}\textbf{0.024} \\
& \textbf{AlphaRec}  & \textbf{0.040} & \textbf{0.021} & 0.034 & 0.018 & \underline{0.039} & \underline{0.021} & \underline{0.043} & \underline{0.023} \\
& \textbf{LightGCN}  & 0.036 & 0.019 & \underline{0.036} & \underline{0.019} & 0.036 & 0.019 & 0.036 & 0.019 \\
& \textbf{XSimGCL}   & \underline{0.039} & \underline{0.020} & \textbf{0.039} & \textbf{0.020} & 0.039 & 0.020 & 0.039 & 0.020 \\
\midrule

\multirow{4}{*}{\rotatebox[origin=c]{90}{\steam}}
& \textbf{\method} & \textbf{0.233} & \textbf{0.187} & \textbf{0.234} & \textbf{0.190} & \textbf{0.240} & \cellcolor{blue!10}\textbf{0.194} & \cellcolor{red!10}\textbf{0.241} & \textbf{0.193} \\
& \textbf{AlphaRec}  & \underline{0.226} & \underline{0.182} & \underline{0.230} & \underline{0.186} & \underline{0.236} & \underline{0.188} & \underline{0.238} & \underline{0.189} \\
& \textbf{LightGCN}  & 0.201 & 0.152 & 0.201 & 0.152 & 0.201 & 0.152 & 0.201 & 0.152 \\
& \textbf{XSimGCL}   & 0.225 & 0.166 & 0.225 & 0.166 & 0.225 & 0.166 & 0.225 & 0.166 \\
\bottomrule
\hline
\end{tabular}
}
\end{table}

\section{Computational Complexity Analysis}
\label{appendix:comp}

\smallsection{Time complexity analysis} We theoretically analyze the time complexity of each phase of \method, where Algorithms 1, 2, and 3 show the preprocessing, training, and inference phases, respectively.
Suppose 
$n = |\items|$ is the number of items, 
$m = \sum_{I \in \traincol}|I|$ is the total number of interactions, 
$h = |\traincol|$ is the number of interaction sets ($h \leq m$), 
$K_c$ is the number of similar items ($K_c \leq n$),
$|\batch|$ is the size of mini-batch, 
$\dlm$ is the dimension of LRs, and $d$ denotes the embedding dimensions of the MLP layers.

\begin{lemma}[Time Complexity for Preprocessing]
The time complexity of Algorithm~1 is $O(T_{\textnormal{LM}}n+ mn + (\dlm + \log{K_c})n^2)$, where $\tlm$ denotes the time at which the LM produces an embedding.
\end{lemma}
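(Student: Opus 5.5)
The plan is to bound the cost of Algorithm~\ref{alg:preprocessing} line by line and then add the pieces. Line~\ref{alg:preprocessing:lr} calls the LM once per item, which gives $O(\tlm n)$. Line~\ref{alg:preprocessing:h} builds the inverted lists $H(i)$ with one scan over all interaction sets, costing $O(m + n)$. Initializing $I^{+} \gets I$ for every $I$ takes $O(m)$. All of these are absorbed into $O(\tlm n + mn)$.

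The main loop (line~\ref{alg:preprocessing:i}) runs over $n$ items, and for each $i$ the inner loop runs over $n$ items $j$.

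\textbf{Semantic similarity.} Computing $\simf{S}(i,j)$ is a dot product in $\mathbb{R}^{\dlm}$. This gives $O(\dlm n^2)$ in total.

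\textbf{Behavioral similarity.} This is the step I expect to be the main obstacle. A naive intersection $|H(i)\cap H(j)|$ costs $O(|H(i)|+|H(j)|)$ with sorted lists. Summed over all pairs, that is $\sum_i\sum_j (|H(i)|+|H(j)|) = 2n\sum_i |H(i)| = 2nm$, since $\sum_i |H(i)| = m$. So the total is $O(mn)$.

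An alternative view gives the same bound. For a fixed $i$, all $\simf{B}(i,\cdot)$ can be obtained by scanning every $I \in H(i)$ and incrementing counters for each item in $I$. This costs $O(n + \sum_{I\in H(i)}|I|)$. Summing over $i$ gives $O(n^2 + \sum_{I}|I|^2) \le O(n^2 + mn)$, because $|I|\le n$. Either way the behavioral part is $O(mn + n^2)$. I would present the counter-based version because it matches the sparse $\matt{A}\mat{A}$ implementation described in Appendix~\ref{appendix:algorithms}.

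\textbf{Candidate selection.} For each $i$, line~\ref{alg:preprocessing:cand} computes $\mu_i$ in $O(n)$. It then selects the top-$K_c$ items by $\simf{B}$ using a size-$K_c$ heap in $O(n\log K_c)$, and filters in $O(K_c)$. Over all items this is $O(n^2\log K_c)$.

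\textbf{Representation augmentation.} Line~\ref{alg:preprocessing:lr_aug} forms $\lr{i}^{+}$ as a weighted sum of at most $K_c$ vectors, which costs $O(K_c\dlm)$ per item. Since $K_c\le n$, the total $O(K_c \dlm n)$ is within $O(\dlm n^2)$.

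\textbf{Interaction augmentation.} Lines~\ref{alg:preprocessing:i_aug}--\ref{alg:preprocessing:i_aug_union} insert $|\cand{i}|\le K_c$ items into each $I^{+}$ with $I\in H(i)$. Using hash sets, this totals $O(K_c \sum_i |H(i)|) = O(K_c m)$, which is at most $O(mn)$.

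Summing all contributions gives $O(\tlm n + mn + \dlm n^2 + n^2\log K_c) = O(\tlm n + mn + (\dlm+\log K_c)n^2)$, as claimed. The only point needing care is the amortized double sum for $\simf{B}$: bounding each intersection by $O(h)$ instead would give a loose $O(hn^2)$ bound.
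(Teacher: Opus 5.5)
Your proposal is correct and follows the paper's proof essentially line by line, with the same per-step costs: $O(\tlm n)$ for the LM, $O(m)$ for the inverted lists, and sorted-list intersections for $\simf{B}$ summing to $O(nm)$ via $\sum_i|H(i)|=m$; also $O(\dlm n^2)$ for $\simf{S}$, a size-$K_c$ heap giving $O(n^2\log K_c)$, and $O(K_c m)$ and $O(K_c\dlm n)$ for the two augmentations. The differences are minor: you sum over ordered pairs ($2nm$) where the paper sums over $i\le j$ ($(n+1)m$), and you explicitly absorb the augmentation terms using $K_c\le n$, which the paper leaves implicit; your counter-based alternative for $\simf{B}$ is a valid extra.
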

\begin{proof}
The algorithm takes $O(\tlm n)$ time as it computes the LRs of all items (line~1).
Constructing an inverted list for $H(i)$ takes $O(m)$ time, since $H(i)$ for each $i$ is built by iterating over all interactions in $\traincol$ (line~2).
Each intersection for $\simf{B}(\cdot)$ takes $O(|H(i)| + |H(j)|)$ time, assuming each $H(i)$ is sorted, and then it takes $\sum_{i \leq j}(|H(i)| + |H(j)|) = (n+1)\sum_{i}|H(i)|=O(nm)$ time (line~6).
Computing all $\simf{S}(\cdot)$ takes $O(\dlm n^2)$.
Then, it consumes $O(n^2\log{K_c}+nK_c)$ time to find similar items $\cand{i}$ for all items (line~8), if we use min-heap for sorting.
It takes $O(mK_c)$ and $O(nK_c\dlm)$ for obtaining $I^+$ and $\lr{i}^{+}$, respectively (lines~9~-~11).
Combining all the above terms proves the claim.
\end{proof}

\begin{lemma}[Time Complexity for Training]
In Algorithm~2, each epoch takes $O((T_i d n + K_c m + d h)\dlm + (n_s + |\batch|)hd)$ time, where $T_i = h/|\batch|$ is the number of iterations over all batches, and $n_s$ is the number of negatives.
\end{lemma}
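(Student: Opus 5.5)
We bound the cost of one epoch of Algorithm~2 line by line, summing over the $T_i = h/|\batch|$ mini-batches and charging each MLP evaluation on a $\dlm$-dimensional input $O(\dlm d)$ time. The second layer of each two-layer MLP costs only $O(d^2)$, which we absorb by treating $d \le \dlm$.

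\textbf{Per-batch item embeddings (line~\ref{alg:training:emb}).} In every batch we compute $\emb{i}=\texttt{MLP}(\lr{i})$ and $\emb{i}^{+}=\texttt{MLP}^{+}(\lr{i}^{+})$ for all $n$ items. This costs $O(n d \dlm)$ per batch, hence $O(T_i d n \dlm)$ per epoch.

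\textbf{Interaction-set embeddings (lines~\ref{alg:training:emb_I}--\ref{alg:training:emb_I+}).} For each $I$, forming $\texttt{mean}(\{\lr{i}\mid i\in I\})$ costs $O(|I|\dlm)$ and forming the augmented mean costs $O(|I^{+}|\dlm)$. Since each $i\in I$ contributes at most $|\cand{i}|\le K_c$ extra items, we have $|I^{+}|\le (K_c+1)|I|$. Summing over all $I\in\traincol$ gives $O(K_c m \dlm)$. Each set then requires two MLP passes of cost $O(d\dlm)$, for $O(dh\dlm)$ in total.

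\textbf{Loss terms.} Sampling $p$ and $N_I$ costs $O(n_s)$ per set (line~\ref{alg:training:sampling}). Each local InfoNCE term in Eqs.~\eqref{eq:loss:rec:original}--\eqref{eq:loss:rec:augmented} requires $n_s+1$ cosine similarities in $\mathbb{R}^d$, i.e., $O(n_s d)$ per set and $O(n_s h d)$ per epoch. The alignment losses in Eqs.~\eqref{eq:loss:align:inter}--\eqref{eq:loss:align:item} use in-batch candidates, so for each anchor the softmax runs over $|\batch|$ (or $|P_{\batch}|\le|\batch|$) embeddings, costing $O(|\batch|^2 d)$ per batch and $O(|\batch| h d)$ per epoch. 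The backward pass has the same asymptotic cost as the forward pass, and the parameter update costs $O(d\dlm)$ per batch, which is dominated by the first term.

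Summing the three groups yields $O((T_i d n + K_c m + d h)\dlm + (n_s+|\batch|)hd)$.

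The main obstacle is bounding $\sum_{I}|I^{+}|$ by $O(K_c m)$ rather than something depending on $n$. This follows from the union bound $|I^{+}|\le |I|+\sum_{i\in I}|\cand{i}|$ together with $|\cand{i}|\le|\mathcal{C}_i|=K_c$. A secondary point is justifying that the all-item embedding recomputation is charged per batch, which is exactly what produces the $T_i d n \dlm$ term.
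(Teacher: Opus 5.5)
Your proposal is correct and follows essentially the same line-by-line accounting as the paper's proof. Recomputing all item embeddings in every batch gives the $T_i d n \dlm$ term; the means over $I$ and $I^{+}$ with $|I^{+}| = O(K_c|I|)$, plus two MLP passes per set, give $(K_c m + dh)\dlm$; and the InfoNCE and in-batch alignment losses give $(n_s + |\batch|)hd$. Your extra details (the union-bound justification of $|I^{+}|\le (K_c+1)|I|$, charging sampling per set rather than per batch, and accounting for the backward pass) are refinements the paper leaves implicit, and they do not change the bound.
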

\begin{proof}
    We first analyze the time for processing each mini-batch $\batch$.
    It takes $O(\dlm d n)$ time to perform the MLP operations for all items (line~4). 
    It takes $O(\dlm(|I| + d))$ time for each $I$ and $O(\dlm(|I^{+}| + d))$ for each $I^{+}$ (lines~6~and~7), where $|I^{+}| = O(K_c |I|)$, thus the overall cost becomes $O(\dlm(K_c |I| + d))$.
    Sampling $p$ and $N_I$ takes $O(n_s)$ time in expectation when using rejection sampling (line~8). 
    Then, it takes $O(|\batch| n_s d)$ time for $\mathcal{L}_{\textnormal{rec}}$, and $O(|\batch|^2d)$ time for $\mathcal{L}_{\textnormal{align}}$  (lines~9~and~10).
    Considering all batches is represented as follows:
    \begin{align*}
        \sum_{\batch \subseteq \traincol}\!\!\!\Bigl(&\dlm d n + \sum_{I \in \batch}\dlm(K_c |I| + d) + n_s + |\batch| n_s d + |\batch|^2d\Bigr) \\
        &= T_i\dlm d n + \dlm K_c m + \dlm d h + T_i n_s + h n_s d + |\batch|hd \\
        &= O((T_i d n + K_c m + d h)\dlm + (n_s + |\batch|)hd,
    \end{align*}
    where $n_s \leq n$.
\end{proof}

\begin{lemma}[Time Complexity for Inference]
The time complexity of Algorithm~3 is $O(\dlm d n + n\log{K})$, where $K$ is the parameter for top-$K$ recommendation.
\end{lemma}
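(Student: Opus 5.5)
The plan is to walk through Algorithm~\ref{alg:inference} line by line, bound the cost of each line, and add the bounds, in the same way as the preceding lemmas for the preprocessing and training phases.

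\textbf{Line~\ref{alg:inference:mlpq} (query embedding).} Computing $\texttt{mean}(\{\lr{i} \mid i \in I_q\})$ takes $O(\dlm |I_q|)$ time. Since $I_q \subseteq \items$, this is $O(\dlm n)$. Applying the two-layer MLP, which maps $\mathbb{R}^{\dlm}$ to $\mathbb{R}^{d}$, costs $O(\dlm d + d^2)$. We assume $d \le \dlm$, which holds in the reported settings ($d=64$, $\dlm \ge 1{,}024$), so this is $O(\dlm d)$. The whole line is then $O(\dlm n + \dlm d)$, which is absorbed into $O(\dlm d n)$.

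\textbf{Line~\ref{alg:inference:mlpi} (item embeddings).} The line applies the MLP to all $n$ item LRs. Each application costs $O(\dlm d)$, so the line takes $O(\dlm d n)$ time. This is the dominant linear-algebra term.

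\textbf{Line~\ref{alg:inference:topk} (scoring and top-$K$).} Each cosine similarity $\sigma(\emb{I_q}, \emb{i})$ costs $O(d)$, so scoring all items takes $O(dn)$. Selecting the top-$K$ items is done by scanning the scores while maintaining a size-$K$ min-heap, as in the preprocessing lemma. Each heap update costs $O(\log K)$, so this gives $O(n\log K)$. Sorting the heap at the end costs $O(K\log K)$, which is at most $O(n\log K)$ because $K \le n$.

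Summing the three lines gives $O(\dlm d n + dn + n\log K) = O(\dlm d n + n\log K)$. No step is a real obstacle. The only points that need an explicit remark are the assumption $d \le \dlm$, used to absorb the hidden-layer cost $d^2$, and $|I_q| \le n$.
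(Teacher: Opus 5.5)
Your proof is correct and follows the paper's argument essentially line for line. Both bound the query embedding by $O(\dlm(|I_q|+d))$, all item embeddings by $O(\dlm d n)$, and scoring plus min-heap top-$K$ selection by $O(dn + n\log K)$; your explicit remarks that $|I_q|\le n$ and that $d \le \dlm$ absorbs the hidden-layer $d^2$ term just spell out what the paper leaves implicit when it charges each MLP application $O(\dlm d)$.
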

\begin{proof}
    The algorithm takes $O(\dlm(|I_q| + d))$ time to perform the averaging and MLP operations (line~1).
    For all items, it takes $O(\dlm d n)$ for computing their embeddings (line~2).    
    Then, it takes $O(dn + n\log K)$ time to compute the scores of all items and select the top-$K$ items using a min-heap (line~3).
    Combining all the above terms proves the claim.
\end{proof}

\smallsection{Space complexity analysis}
We analyze the space complexity of each phase.
\begin{lemma}[Space Complexity for Preprocessing]
    The preprocessing phase of \method consumes $O(\dlm n + K_c(m+n))$ space.
\end{lemma}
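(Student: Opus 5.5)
We bound the memory held by Algorithm~\ref{alg:preprocessing} by going through the data structures it keeps alive. For each one we charge its size in terms of $n$, $m$, $K_c$, and $\dlm$. The claimed bound $O(\dlm n + K_c(m+n))$ is then the sum of these charges, provided no step materializes a dense $n\times n$ object.

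\textbf{Persistent outputs and inputs.} First, the item LRs $\{\lr{i}\}$ from line~\ref{alg:preprocessing:lr} occupy $O(\dlm n)$. The augmented LRs $\{\lr{i}^{+}\}$ from Eq.~\eqref{eq:representation_aug} occupy another $O(\dlm n)$. Second, the inverted lists $H(i)$ built in line~\ref{alg:preprocessing:h} store each interaction once, since $\sum_i |H(i)| = m$. Together with $\traincol$ itself, this takes $O(m)$ space. Third, the similar-item sets satisfy $|\cand{i}| \le |\mathcal{C}_i| \le K_c$, so storing all of them, along with the weights $\simf{B}(i,j)$ needed for Eq.~\eqref{eq:representation_aug}, takes $O(K_c n)$. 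Fourth, each augmented set satisfies
\[
|I^{+}| \le |I| + \sum_{i\in I}|\cand{i}| \le (K_c+1)|I|,
\]
so summing over $\traincol$ gives $O(K_c m)$ for all $I^{+}$.

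\textbf{Transient quantities (the main obstacle).} The delicate point is that the similarities must not be stored as full $n\times n$ matrices. Inside the outer loop of line~\ref{alg:preprocessing:i}, we keep only the row for the current item $i$. The values $\simf{B}(i,\cdot)$ and $\simf{S}(i,\cdot)$ form a row of length $n$, which is $O(n)$ scratch space and is reused across iterations. The mean $\mu_i$ is a single scalar accumulated in one pass. The top-$K_c$ selection of Eq.~\eqref{eq:cand:c} uses a min-heap of size $K_c$. After line~\ref{alg:preprocessing:lr_aug}, only $\cand{i}$ and $\lr{i}^{+}$ are retained. The transient cost is therefore $O(n + K_c + \dlm)$, which is dominated by the other terms.

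I would add a remark on the sparse-matrix implementation $\mat{B}=\matt{A}\mat{A}$ mentioned in Appendix~\ref{appendix:algorithms}. The claimed bound refers to the row-wise Algorithm~\ref{alg:preprocessing}; with the matrix product, $\mat{B}$ can instead be formed one row at a time. Summing all contributions gives
\[
O(\dlm n) + O(m) + O(K_c n) + O(K_c m) + O(n) = O(\dlm n + K_c(m+n)),
\]
which proves the claim.
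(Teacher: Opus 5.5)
Your proposal is correct and follows essentially the same route as the paper's proof, item by item: the LRs and augmented LRs in $O(\dlm n)$, the inverted lists $H(i)$ in $O(m)$, the augmented sets $I^{+}$ in $O(K_c m)$, the heap-based $\cand{i}$ in $O(K_c n)$, and the per-item similarity rows computed in place in $O(n)$. Your extra remark, which ties the bound to the row-wise Algorithm~1 rather than to a fully materialized sparse $\matt{A}\mat{A}$, is a sensible clarification the paper does not make.
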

\begin{proof}
    The LRs $\{\lr{i}\}$ and $\{\lr{i}^{+}\}$ for all items $i$ require $O(\dlm n)$ space (lines~1~and~11).
    The $\{H(i)\}$ for all items uses $O(m)$ space (line~2).
    The set $I^+$ uses $O(K_c|I|)$ space, and thus for all $I \in \traincol$, it takes $O(K_c m)$ space  (line~10).
    To find and store $\cand{i}$, it uses $O(K_c)$ space when a heap is used, and $O(K_c n)$ in total for all items (line~8).
    Note that for each $i$, it computes similarity scores of $i$ for every $j$ in an in-place manner, requiring only $O(n)$ space without storing results from previous $i$’s  (lines~5-7).
    Combining all the above terms proves the claim.
\end{proof}

\begin{lemma}[Space Complexity for Training]
    The training phase of \method uses $O(m + d(n + \dlm + n_s + |\batch|))$ space.
\end{lemma}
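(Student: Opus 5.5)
The plan is to account for every object that must be resident in memory while Algorithm~\ref{alg:training} runs. The inputs come from the preprocessing phase; the parameters $\Theta$ and the per-batch activations come from training. I will bound each of these three groups separately and then add the bounds. The convention is the same as in the space lemma for preprocessing: intermediate quantities are computed in place and discarded once used, so only the peak footprint of a single mini-batch iteration counts.

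\textbf{Inputs.} The augmented collection $\{I \mapsto I^+\}$ must be stored. The original sets take $\sum_{I}|I| = m$ entries. A naive count for the augmented sets gives $O(K_c m)$, which would contradict the stated bound. Since $K_c$ is a hyperparameter at most $10$ (Table~\ref{tab:hyperparams}), I will treat it as a constant, as the paper does after Theorem~\ref{theorem:space}, and absorb the sets into $O(m)$. A cleaner alternative is to store only $\{\cand{i}\}$, using $O(K_c n)$ space, and form $I^+$ on the fly for each $I \in \batch$. The per-batch buffer is then $O(K_c\sum_{I\in\batch}|I|)$, which is again dominated by $O(m)$. The item LRs $\{\lr{i}\}$ and $\{\lr{i}^+\}$ naively occupy $O(\dlm n)$. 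The stated bound has no $\dlm n$ term, so I will argue that the LR table is static input data, read-only and loaded once, and exclude it from the working space. This is the step I expect to be the main obstacle: the bound holds only under this accounting convention, and I will state the convention explicitly rather than hide it.

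\textbf{Parameters.} The model consists only of $\texttt{MLP}$ and $\texttt{MLP}^+$, each a two-layer network from $\dlm$ to $d$. Their weights and biases take $O(\dlm d + d^2) = O(d\dlm)$ space, and the Adam optimizer states add only a constant factor. This gives the $d\dlm$ term.

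\textbf{Per-batch activations.} Line~\ref{alg:training:emb} materializes $\emb{i}$ and $\emb{i}^+$ for all items, which takes $O(dn)$. The means fed to the MLPs can be streamed into a $\dlm$-sized accumulator and projected immediately, leaving $O(d)$ per interaction set, so $\emb{I}$ and $\emb{I}^+$ for the batch take $O(d|\batch|)$. The negatives $N_I$ are indices into the already stored item embeddings. The InfoNCE losses can be evaluated one $I$ at a time, using $O(n_s d)$ workspace for the gathered candidate embeddings, or $O(n_s)$ for the scores alone. The two alignment losses compare embeddings within the batch, needing $O(|\batch| d)$ for the embeddings plus the $|\batch|\times|\batch|$ logits. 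I will argue these logits are computed row-wise so that only $O(|\batch|)$ is alive at a time. Backpropagation at most doubles the footprint.

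Summing the three groups gives $O(m) + O(d\dlm) + O(dn + d n_s + d|\batch|) = O(m + d(n + \dlm + n_s + |\batch|))$, which is the claimed bound.
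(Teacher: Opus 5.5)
Your proposal is correct and follows the paper's accounting term by term: $O(m)$ for the interaction sets, $O(d\dlm)$ for the two MLPs, $O(dn)$ for the all-item embeddings of line~4, $O(n_s d)$ for the InfoNCE candidate sets, and $O(|\batch| d)$ for the alignment sets, each computed in place. The paper's proof silently omits the $O(\dlm n)$ LR tables and the $O(K_c m)$ augmented sets, so the conventions you state explicitly (LRs treated as read-only input, $K_c$ treated as a constant) are exactly the ones the paper relies on tacitly, which makes your version the more transparent of the two.
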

\begin{proof}
    As an input, $\traincol$ uses $O(m)$ space.
    Each MLP layer takes $O(\dlm d)$ space (line~1).
    The hidden embeddings $\{\emb{i}\}$ and $\{\emb{i}^+\}$ use $O(dn)$ space (line~4).
    For line 9, it uses $O(d n_s)$ space for $\mathcal{E}_{I}$ and $\mathcal{E}_{I}^{+}$ used to compute the local losses for $\mathcal{L}_{\textnormal{rec}}$ in an in-place manner.
    For line~10, it uses $O(|\batch|d)$ space for $\mathcal{E}_{\batch}$, $\mathcal{E}_{\batch}^{+}$, $\mathcal{E}_{P_{\batch}}$, and $\mathcal{E}_{P_{\batch}}^{+}$ used to compute the local losses for $\mathcal{L}_{\textnormal{align}}$ in an in-place manner.
    Combining all the above terms proves the claim.
\end{proof}

\begin{lemma}[Space Complexity for Inference]
    The inference phase of \method uses $O((\dlm + d)n + K)$ space.
\end{lemma}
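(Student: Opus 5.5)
We account line by line for the memory held while running Algorithm~\ref{alg:inference}, in the same style as the preceding space lemmas, and then add up the terms.

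\textbf{Inputs.} The item LRs $\{\lr{i} \mid i \in \items\}$ occupy $O(\dlm n)$ space. The parameters $\theta_{\texttt{MLP}}$ are a two-layer network from $\dlm$ to $d$, so they take $O(\dlm d)$ space. The query set $I_q$ takes $O(|I_q|)$ space, and $|I_q| \le n$.

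\textbf{Lines~\ref{alg:inference:mlpq} and~\ref{alg:inference:mlpi}.} The mean of $\{\lr{i} \mid i \in I_q\}$ can be accumulated in a single $\dlm$-dimensional buffer, so it costs $O(\dlm)$. Applying the MLP gives $\emb{I_q}$ with $O(d)$ space, and hidden activations are $O(d)$ per forward pass. Computing $\emb{i}$ for every item stores $O(dn)$ values. If the items are processed in a batch, intermediate activations are also $O(dn)$, since the hidden width is at most of order $d$.

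\textbf{Line~\ref{alg:inference:topk}.} The scores $\sigma(\emb{I_q}, \emb{i})$ can either be computed one at a time or kept in an $O(n)$ array. The top-$K$ items are maintained in a min-heap of size $K$, which needs $O(K)$ space. The returned list $\mathcal{R}_{I_q}$ is also $O(K)$.

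Adding these gives $O(\dlm n + \dlm d + dn + n + K)$. The only term that needs an argument is the parameter cost $O(\dlm d)$, which does not obviously fit inside $(\dlm + d)n$. This is the main obstacle. The plan is to absorb it under the mild assumption $d \le n$: the embedding dimension (64 in our settings) is far smaller than the number of items. Then $\dlm d \le \dlm n$ and the total becomes $O((\dlm + d)n + K)$. If this assumption were not made, the term would be stated explicitly. As a remark, unlike Algorithm~\ref{alg:training}, nothing here stores $\traincol$, any user embeddings, or the augmented quantities $\lr{i}^{+}$, $I^{+}$. This is why no $O(m)$ or $O(h)$ term appears.
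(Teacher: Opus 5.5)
Your proposal is correct and follows essentially the paper's argument: the paper charges $O((\dlm+d)n)$ for the item LRs together with all $\emb{i}$ and $\emb{I_q}$ (line~2) and $O(K)$ for the top-$K$ heap (line~3). The one difference is that you explicitly count the $O(\dlm d)$ storage of $\theta_{\texttt{MLP}}$ and absorb it via $d \le n$, a term the paper's proof leaves implicit.
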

\begin{proof}
    Computing all $\{\emb{i}\}$ as well as $\emb{I_q}$ takes $O((\dlm+d) n)$ space  (line~2), while selecting the top-$K$ items requires $O(K)$ space if a heap is used (line~3).
\end{proof}

\section{Case Study for Over-Smoothing}
\label{appendix:case:over}
Figure~\ref{fig:case:over} shows a case study comparing AlphaRec and \method to investigate the over-smoothing issue on the \baby dataset.
For the heavy user (User~A) with many interactions ($|I_A|=191$), AlphaRec assigns uniformly high scores with very low variance, suggesting that its item embeddings have become excessively similar  (i.e., over-smoothed) due to GNNs.
However, \method constructs user representations without GNNs, thereby preserving embedding diversity and avoiding over-smoothing for User~A.
For the light user (User~B) with few interactions ($|I_B|=17$),  AlphaRec yields higher yet broadly distributed scores than \method, indicating that even under sparse interactions, it produces more similar but unstable embeddings, whereas \method shows less similarity with a more compact and stable distribution, suggesting robustness against such effects.

\begin{figure}[h]
    \centering
    \hspace{6mm}
    \includegraphics[width=0.92\linewidth]{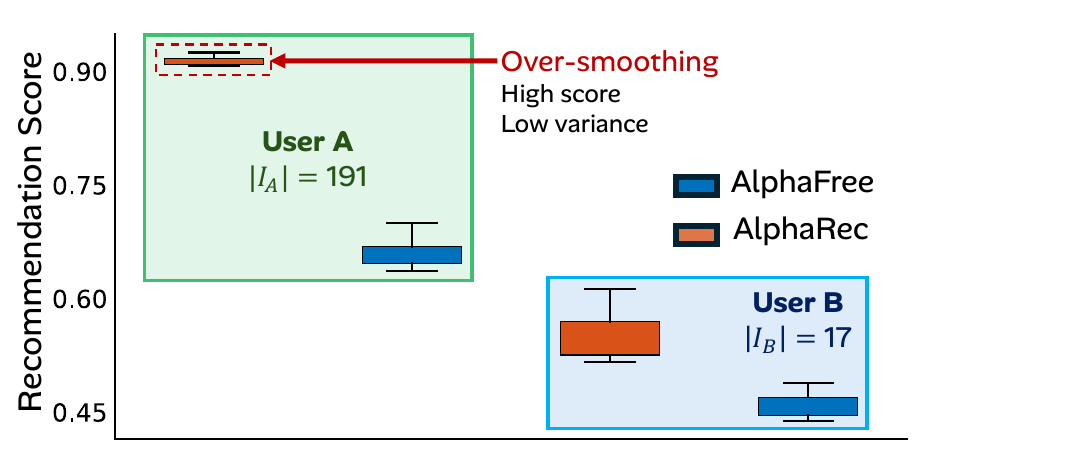}
    \caption{
    Case study of over-smoothing of AlphaRec and \method for heavy and light users on the \baby dataset.
    }
    \label{fig:case:over}
\end{figure}

\section{Experiments on Running Time}
\label{appendix:runtime}
In this section, we evaluated the running time of \method compared to LR-based methods (i.e., RLMRec and AlphaRec) and LightGCN across the preprocessing, training, and inference phases.
In the preprocessing phase, LR-based methods mainly compute item LRs using an LM, while other auxiliary operations vary across methods; in particular, \method computes similar items. 
LightGCN initializes embeddings and normalizes the user–item interaction matrix during preprocessing.
The preprocessing phase is executed only once, and we measured its wall-clock time $T_p$ in seconds.
For the training phase, we measured the average running time $T_t$ per epoch in seconds. 
During the inference phase, we measure the average time $T_i$ required to generate recommendations for a single user, reported in microseconds.

Table~\ref{tab:runtime} shows the results across different datasets. 
The preprocessing time of AlphaRec and that of \method are similar, as the LR generation part dominates this phase.
While their performances in training time are also comparable, AlphaRec suffers from out-of-memory errors on large datasets such as \beauty and \health (requiring over 24 GB of VRAM).
RLMRec runs out of time during preprocessing (exceeding 50 hours) on all evaluated datasets.
While LightGCN is the fastest in preprocessing and training, it exhibits higher inference latency due to GNN propagation and achieves lower recommendation performance than LR-based methods, as reported in Section~\ref{sec:exp:overall}.
From the inference-time perspective, \method achieves latency comparable to LightGCN. In contrast, AlphaRec relatively exhibits higher inference latency. 
It is worth noting that both AlphaRec and \method operate on the dimension $\dlm$ of LRs exceeding 3,000, whereas LightGCN performs computations using 64-dimensional embeddings.

\begin{table}[h!]
\scriptsize
\setlength{\tabcolsep}{0.4pt}
\renewcommand{\arraystretch}{0.97}
\definecolor{lightgray}{gray}{0.94}
\centering
\caption{
\label{tab:runtime}
Runtime comparison of \method with LR-based methods and LightGCN.
$T_{p}$ (sec), $T_{t}$ (sec), and $T_{i}$ ($\mu$s) indicate preprocessing, training (per epoch), and inference (per query user) times, respectively.
Parentheses under $T_{p}$ for LR-based methods indicate the ratio for generating LRs during preprocessing.
}
\begin{tabular}{
c
>{\columncolor{lightgray}}c
>{\columncolor{lightgray}}c
>{\columncolor{lightgray}}c
ccc
>{\columncolor{lightgray}}c
>{\columncolor{lightgray}}c
>{\columncolor{lightgray}}c
ccc
}

\hline
\toprule
\textbf{Methods}
& \multicolumn{3}{>{\columncolor{lightgray}}c}{\textbf{RLMRec}}
& \multicolumn{3}{c}{\textbf{AlphaRec}}
& \multicolumn{3}{>{\columncolor{lightgray}}c}{\textbf{\method}}
& \multicolumn{3}{c}{\textbf{LightGCN}}\\
\midrule
\textbf{Datasets}
& \makebox[2.0em][c]{$T_{p}$} & \makebox[1.5em][c]{$T_{t}$} & \makebox[1.8em][c]{$T_{i}$}
& \makebox[2.2em][c]{$T_{p}$} & \makebox[2.0em][c]{$T_{t}$} & \makebox[2.0em][c]{$T_{i}$}
& \makebox[2.2em][c]{$T_{p}$} & \makebox[1.8em][c]{$T_{t}$} & \makebox[2.0em][c]{$T_{i}$}
& \makebox[1.8em][c]{$T_{p}$} & \makebox[1.8em][c]{$T_{t}$} & \makebox[2.0em][c]{$T_{i}$} \\
\midrule
\video
& o.o.t. & \multicolumn{2}{@{}>{\columncolor{lightgray}}c@{}}{n/a}
& 11{\tiny(98\%)} & 5  & 19
& 563{\tiny(98\%)} & 9  & 14
& 10 & 1  & 13\\
\baby
& o.o.t. & \multicolumn{2}{@{}>{\columncolor{lightgray}}c@{}}{n/a}
& 797{\tiny(98\%)} & 12 & 23
& 802{\tiny(98\%)} & 13 & 18
& 39 & 1  & 18\\
\steam
& o.o.t. & \multicolumn{2}{@{}>{\columncolor{lightgray}}c@{}}{n/a}
& 342{\tiny(98\%)} & 91 & 28
& 315{\tiny(98\%)} & 45 & 3
& 84 & 8  & 9\\
\beauty
& o.o.t. & \multicolumn{2}{@{}>{\columncolor{lightgray}}c@{}}{n/a}
& 4614{\tiny(99\%)} & n/a & {\tiny(o.o.m.)}
& 4612{\tiny(99\%)} & 157 & 107
& 92 & 36  & 97\\
\health
& o.o.t. & \multicolumn{2}{@{}>{\columncolor{lightgray}}c@{}}{n/a}
& 4211{\tiny(99\%)} & n/a & {\tiny(o.o.m.)}
& 4180{\tiny(99\%)} & 161 & 93
& 563 & 39  & 90\\
\bottomrule
\hline
\end{tabular}
\end{table}


\end{document}